\newtheorem{theorem}{Theorem}[section]
\newtheorem{proposition}{Proposition}[section]
\newtheorem{lemma}[theorem]{Lemma}
\newtheorem{problem}{Problem}
\def\eqref#1{(\ref{#1})}
\def\1{\bm{1}}
\def\vxi{{\bm{\xi}}}
\def\vtheta{{\bm{\theta}}}
\def\vc{{\bm{c}}}
\def\vu{{\bm{u}}}
\def\vx{{\bm{x}}}
\def\mA{{\bm{A}}}
\def\mL{{\bm{L}}}
\def\mW{{\bm{W}}}
\def\mX{{\bm{X}}}
\def\mY{{\bm{Y}}}
\DeclareMathAlphabet{\mathsfit}{\encodingdefault}{\sfdefault}{m}{sl}
\SetMathAlphabet{\mathsfit}{bold}{\encodingdefault}{\sfdefault}{bx}{n}
\def\gD{{\mathcal{D}}}
\def\gE{{\mathcal{E}}}
\def\gG{{\mathcal{G}}}
\def\gS{{\mathcal{S}}}
\def\gV{{\mathcal{V}}}
\def\sB{{\mathbb{B}}}
\def\sN{{\mathbb{N}}}
\def\sS{{\mathbb{S}}}
\newcommand{\R}{\mathbb{R}}
\DeclareMathOperator*{\argmax}{arg\,max}
\DeclareMathOperator*{\argmin}{arg\,min}
\newcommand{\abs}[1]{\left\lvert #1 \right\rvert}
\newcommand{\ip}[2]{\left\langle#1,#2\right\rangle}
\newcommand{\norm}[1]{\left\lVert#1\right\rVert}
\newcommand{\T}{\textup{T}}
\title{Graph Neural Network Based Node Deployment \\for Throughput Enhancement}
\author{
 Yifei Yang \\
  School of Electronic Information\\
  Wuhan University\\
  Wuhan, Hubei 430072 \\
  \texttt{yfyang@whu.edu.cn} \\
   \And
 Dongmian Zou \\
  Division of Natural and Applied Sciences\\
  Duke Kunshan University\\
  Kunshan, Jiangsu 215316 \\
  \texttt{dongmian.zou@dukekunshan.edu.cn} \\
  \And
 Xiaofan He \\
  School of Electronic Information\\
  Wuhan University\\
  Wuhan, Hubei 430072 \\
  \texttt{xiaofanhe@whu.edu.cn} \\
}
\begin{document}
\maketitle
\begin{abstract}
The recent rapid growth in mobile data traffic entails a pressing demand for improving the throughput of the underlying wireless communication networks.
Network node deployment has been considered as an effective approach for throughput enhancement which, however, often leads to highly non-trivial non-convex optimizations. 
Although convex approximation based solutions are considered in the literature, their approximation to the actual throughput may be loose and sometimes lead to unsatisfactory performance. 
With this consideration, in this paper, we propose a novel graph neural network (GNN) method for the network node deployment problem.
Specifically, we fit a GNN to the network throughput and use the gradients of this GNN to iteratively update the locations of the network nodes.
Besides, we show that an expressive GNN has the capacity to approximate both the function value and the gradients of a  multivariate permutation-invariant function, as a theoretic support to the proposed method.
To further improve the throughput, we also study a hybrid node deployment method based on this approach.
To train the desired GNN, we adopt a policy gradient algorithm to create datasets containing good training samples.
Numerical experiments show that the proposed methods produce competitive results compared to the baselines.
\end{abstract}


\section{Introduction}\label{sec:introduction}
The increasing number of smart mobile terminals and devices drives a massive growth in mobile data traffic, which in turn leads to a pressing need for improving the throughput of the underlying wireless communication networks \cite{chin2014emerging}.
As the wireless channel and data rate between any two network nodes are often influenced by their locations, the throughput of many wireless networks, ranging from conventional cellular networks \cite{gupta2015sinr} and vehicular communication networks \cite{wu2012cost}, to modern Internet of things communication networks \cite{you20206g} and drone-enabled communication networks \cite{merwaday2015uav}, closely hinges on the deployment of the network nodes.

In the literature, there have been some pioneering works devoted to improving the network node deployment.
Some works mainly focus on optimizing the statistics and distributions of the network node locations
\cite{wang2014coverage, sung2013attainable, lyu2020spatial, merwaday2015uav}. 
In these methods, network node locations are determined by brute force search \cite{merwaday2015uav} or random selection \cite{wang2014coverage, sung2013attainable, lyu2020spatial}, according to the obtained optimal node location statistics and distributions.
Albeit their simplicity in design, the performances of these deployment strategies are often not satisfactory.
To achieve better throughput, some other works consider the more fine-grained deployment strategy that aims to find the precise optimal location for each network node \cite{wang2018trajectory, chou2019energy, he2014dynamic, rahmati2021dynamic}.
However, the corresponding network deployment problems are much more involved and often non-convex.
To this end, approximation is commonly adopted to convert the original problem into convex ones.
For instance, an alternative expression for the throughput with respect to node locations is obtained in \cite{wang2018trajectory} by successive convex approximation. However, this alternative term is only suitable for simple scenarios with a single relay node. 
For more general scenarios with multiple nodes, the original deployment optimization problem is transformed into an equivalent Lagrange dual problem and solved by using the subgradient method in \cite{chou2019energy}. 
Another existing approach is to use the network eigenvalues derived from spectral graph theory to approximate the network throughput \cite{he2014dynamic, rahmati2021dynamic}.
Albeit the above approaches make the node deployment optimization simpler, their approximation to the actual throughput may be loose and hence sometimes lead to unsatisfactory performance.

To better address the throughput-optimal network node deployment problem, a graph neural network (GNN) based deployment method is developed in this work, which can approximate the throughput more precisely as compared to existing methods.
In particular, we model the throughput as the maximum flow (max-flow) of a network, and use a GNN to learn the functional relationship between the network max-flow and the node deployment.
Then, we derive the gradients of the network max-flow with respect to node locations by conducting backpropagation in this GNN and update the node locations accordingly.
Through in-depth analysis, we show that this method is theoretically reasonable by investigating the approximation capability of permutation-invariant neural networks, for which GNNs provide effective implementation.
In addition, a hybrid approach that combines existing spectral graph theoretic approach and the above GNN based approach is proposed to further improve the performance.

Another challenge in training the desired GNN is to gather good training data. 
Ideally, we want to collect samples of the node locations that correspond to large network throughput. 
However, since the sample space (i.e., all possible node locations) in the considered problem is prohibitively large and continuous, it is intractable to collect sufficient representative data through random sampling. To this end, we propose a reinforcement learning based approach. Specifically, we apply a GNN-based proximal policy optimization (PPO), which has sufficient exploration capabilities, to search for good representative regions with an optimal max-flow to collect good training samples.

We summarize our main contributions as follows:
\begin{enumerate}
    \item We propose a GNN-based node deployment method for network throughput enhancement. Based on this, a hybrid node deployment method is developed to further improve the throughput performance.
    \item By proving that the class of permutation-invariant neural networks has the capacity for approximating both the function values and the gradients of multivariate permutation-invariant functions, we justify the correctness and effectiveness of the proposed method.
    \item We propose to use a GNN-based PPO algorithm to collect good training data.
\end{enumerate}

The remaining parts of the article are organized as follows. Section~\ref{sec:related_work} briefly surveys related works. Section~\ref{sec:problem_maxflow} formulates the problem, and introduces a na\"{i}ve approach. In Section~\ref{sec:theoretic_bsais}, we give theoretical justification and simple numerical examples. In Section~\ref{sec:method}, we introduce the proposed methodology in detail. The experimental results are presented in Section~\ref{sec:experiment}. Eventually, Section~\ref{sec:conclusion} concludes the paper and states the limitation of our approach.

\section{Related Works}\label{sec:related_work}
\subsection{Node Deployment for Wireless Networks}\label{subsec:mol_ctrl}
Recently, network node deployment has been widely used to address the challenges of wireless communication networks, including congestion mitigation \cite{yang2018three}, coverage maximization \cite{mozaffari2016efficient}, as well as quality-of-experience  optimization \cite{chen2017caching}.
Network node deployment has also been considered for throughput enhancement. 
In \cite{merwaday2015uav}, the space is first divided into grids and then the network deployment is implemented with a brute force search method.
In \cite{wang2014coverage}, a random geometry strategy based on the Poisson point process is developed to deploy the network nodes.
In \cite{sung2013attainable} and \cite{lyu2020spatial}, uniform random deployment strategies are considered.
Despite the simplicity of these deployment strategies in design, their performances are usually limited or suffer from intractable complexity.

For precise network deployment strategies, the requirement to find the exact locations of network nodes complicates the optimization problem.
An analytical expression in \cite{wang2018trajectory} for the optimal location is derived by introducing the slack variables and leveraging the successive convex approximation.
Based on the Lagrangian method, a dual problem for the approximate optimal solution is obtained in \cite{chou2019energy} by eliminating additional variables, which can be solved by a subgradient projection method.
From a different viewpoint, the problem of maximizing the end-to-end  throughput (i.e., max-flow) under jamming is considered in \cite{he2014dynamic} and more recently in \cite{rahmati2021dynamic}.
In both of these works, the max-flow was approximated by the Cheeger constant and its maximization was done by a spectral graph theoretical approach.
Nevertheless, the approximations in these works may be loose and thus sometimes may lead to unsatisfactory performance.

\subsection{GNN in Wireless Communication}\label{subsec:rl}
The recently advocated GNN has a significant impact on wireless communications \cite{wu2020comprehensive,abadal2021computing}, as many wireless networks can be naturally abstracted into graphs.
In particular, GNNs have been widely applied to optimization problems in wireless communication networks \cite{zhang2021scalable,eisen2020optimal,guo2021learning}, given the advantage that they can effectively utilize structural and topological information. 
These existing works mainly exploit permutation invariance and equivariance of GNNs to address the challenges in wireless communication networks.
Besides, GNN can be embedded with deep reinforcement learning to enable sufficient exploration capabilities for graph-structural data in wireless communication networks.
For instance, a solution to the connection management problem based on GNN and deep Q-Learning algorithm is proposed in  \cite{orhan2021connection}, which achieved improvement in several objectives.
In \cite{moon2021neuro}, a multi-agent PPO algorithm is proposed, which introduces the graph convolutional network with an empirically driven approach to training wireless MAC controllers.
However, the problems considered in the above works are different from our research problem, and hence their solutions may not be directly applicable.

\subsection{Approximation using GNN}\label{subsec:gnn}
GNNs have been applied to representation learning on graphs since the seminal work of \cite{bruna2013spectral}. The important properties of permutation equivariance and invariance make GNNs suitable for graph data since they do not depend on a specific ordering of nodes \cite{kondor2018covariant, zou2020graph, keriven2019universal, maron2018invariant}. 
However, due to these invariance properties, GNNs are not expected to enjoy universal approximation \cite{hornik1991approximation}. In fact, GNNs are not sufficiently expressive for representing all permutation-invariant neural networks, and the equivalence between expressivity and approximation capability has been established in \cite{chen2019equivalence}. In particular, the Weisfeiler-Leman test is an important benchmark for the expressivity of GNNs and recent studies have focused on going beyond such limits \cite{xu2018how, morris2019weisfeiler, bodnar2021weisfeiler}. 
The universality of a specific class of invariant/equivariant GNNs is proved in \cite{keriven2019universal}, relying on a new generalized Stone-Weierstrass theorem for the algebra of real-valued equivariant functions. The more general notion of universal approximation of group-invariant networks is studied in \cite{maron2019universality}. 

Many applications use the gradients of neural networks to approximate those of the original functions. An early study  numerically studies training neural networks to approximate a multivariate function and its first and second partial derivatives \cite{nguyen1999approximation}. When both function and gradient values are available, it is also possible to consider Sobolev training where the loss function is inherited from the Sobolev spaces \cite{czarnecki2017sobolev, vlassis2021sobolev, son2021sobolev}. However, Sobolev training is not applicable to our problem, since it is difficult to obtain precise gradients in our dataset.

\section{Problem Formulation and a Na\"{i}ve Solution}\label{sec:problem_maxflow}
\subsection{Problem Formulation}\label{subsec:problem_formulation}
In this work, we consider a wireless communication network with $n$ nodes, as shown in Fig.~\ref{network}. The node deployment is denoted by $\left \{ \vxi_{i}= \left ( x_{i}^{\left ( 1 \right ) } , x_{i}^{\left ( 2 \right ) }\right ) \in \mathbb{R}^{2}  : i= 1,\dots ,n  \right \}$, where $\vxi_1$ and $\vxi_n$ are the locations of a pair of stationary source and destination nodes, and $\{\vxi_i\}_{i=2}^{n-1}$ are the locations of mobile relay nodes. With a slight abuse of notation, when there is no confusion, we may also use $\{\vxi_i\}_{i=1}^{n}$ to denote the nodes themselves.
We model this wireless communication network using a simple graph $\gG$ specified as follows. The vertex set of $\gG$, denoted by $\gV$, consists of all nodes $\{\vxi_i\}_{i=1}^n$. The edge set of $\gG$, denoted by $\gE$, consists of all pairs of nodes. 
We assume that for a pair of nodes $\vxi_i$ and $\vxi_j$, the corresponding communication rate between the pair of nodes is a function of their locations and is denoted by $f_{\rm cap}(\vxi_i, \vxi_j)$.
Moreover, we assume $f_{\rm cap} (\vxi_i, \vxi_j)$ is a Lipschitz continuous function with respect to $\vxi_i$ and $\vxi_j$, which holds in most wireless scenarios.
Accordingly, we construct the weighted adjacency matrix $\mA$ of this network, whose $(i,j)$-th entry $ A_{i,j}$ is defined as the communication rate between nodes $i$ and $j$. That is, 
\begin{equation}
  \label{aij}
  A_{i,j} \overset{\bigtriangleup }{= }
  \left\{\begin{matrix}
  f_{\rm cap} (\vxi_i, \vxi_j)  &, ~i\ne j \\
  0 &, ~i= j \end{matrix}\right. .
\end{equation}

Our objective is to find the optimal locations of the relay nodes so that throughput (i.e., max-flow) between the source node and the destination node is optimized. 
It is well-known \cite{ford1956maximal} that when the locations of the nodes are specified, this max-flow equals to the min-cut $C$ of the underlying network graph, defined by
\begin{equation}
  \label{Csgl}
  C = \min_{\{S:\vxi_{1}\in S,\vxi_{n}\in \bar{S} \} } \sum_{\vxi_{i}\in S,\vxi_{j}\in \bar{S}} A_{i,j} ,
\end{equation}
where $S$ is a subset of $\gV$, $\bar{S}$ is the complement of $S$, the minimum is taken over all partitions $\{S, \bar{S}\}$ of $\gV$. 
\begin{figure}[!t]
  \centering
  \includegraphics[width=2.5in]{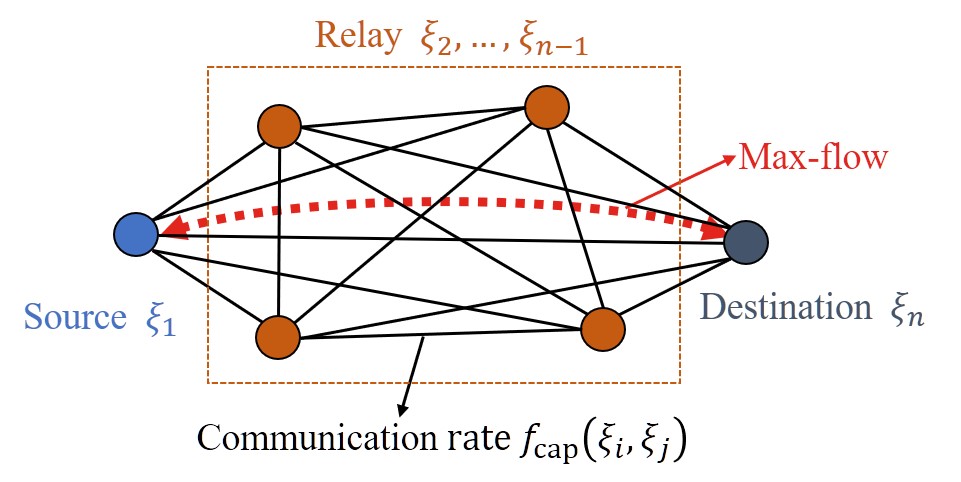}
  \caption{A wireless communication network with a single source to a single destination.}
  \label{network}
\end{figure}

Clearly, the maximum network flow $C$ is a function of the edge weights $A_{i,j}$'s, and thus also a function of the locations $\vxi_i$ of the nodes. 
Actually, $C$ is a $1$-Lipschitz function with respect to each edge weight $A_{i,j}$. We state this fact as a lemma.
\begin{lemma}\label{lemma:1-lip}
$C$ is a $1$-Lipschitz function of $A_{i,j}$ for each $i,j \in \{1,\dots,n\}$.
\end{lemma}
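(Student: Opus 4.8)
The plan is to fix all edge weights except one, say $A_{k,l}$, and view $C$ as a function of that single variable, showing that changing $A_{k,l}$ by $\delta$ changes $C$ by at most $|\delta|$. The natural tool is the min-cut characterization in \eqref{Csgl}: $C$ is the pointwise minimum, over all admissible partitions $\{S,\bar S\}$, of the cut value $\mathrm{cut}(S) = \sum_{\vxi_i\in S,\vxi_j\in\bar S} A_{i,j}$. For each fixed partition $S$, the cut value is an affine function of $A_{k,l}$: its coefficient on $A_{k,l}$ is $1$ if the edge $\{k,l\}$ crosses the cut (i.e.\ exactly one of $k,l$ lies in $S$) and $0$ otherwise. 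Hence each $\mathrm{cut}(S)$, as a function of $A_{k,l}$, is $1$-Lipschitz (indeed, it is either constant or has slope $1$).

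First I would invoke the elementary fact that a pointwise minimum of a family of $1$-Lipschitz functions is again $1$-Lipschitz: if $g(t) = \min_\alpha g_\alpha(t)$ with each $|g_\alpha(t) - g_\alpha(t')| \le |t-t'|$, then for any $t,t'$ pick $\alpha^*$ nearly attaining the min at $t'$, so $g(t) \le g_{\alpha^*}(t) \le g_{\alpha^*}(t') + |t-t'| \le g(t') + |t-t'| + \epsilon$, and symmetrically; letting $\epsilon\to 0$ gives $|g(t)-g(t')|\le|t-t'|$. Since the family of partitions is finite here, the minimum is actually attained and one can drop the $\epsilon$ argument entirely. Applying this with $t = A_{k,l}$ and $g_\alpha = \mathrm{cut}(S)$ ranging over all partitions with $\vxi_1\in S$, $\vxi_n\in\bar S$ yields that $C$ is $1$-Lipschitz in $A_{k,l}$, and since $(k,l)$ was arbitrary this proves the claim.

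There is no serious obstacle; the only point requiring a moment's care is confirming the slope of $\mathrm{cut}(S)$ in $A_{k,l}$ is exactly $0$ or $1$ (never larger), which follows because each unordered edge contributes its weight to $\mathrm{cut}(S)$ at most once — this uses that $\gG$ is a simple graph and that the sum in \eqref{Csgl} is over crossing pairs. One should also note implicitly that varying a single $A_{k,l}$ while holding the others fixed is legitimate at the level of this lemma (the geometric constraint that the $A_{i,j}$ arise from node locations is not used and not needed); the Lipschitz-in-$A_{i,j}$ statement is a purely combinatorial property of the min-cut functional, and the dependence on node locations is recovered afterward by composing with the Lipschitz map $\vxi\mapsto A_{i,j}$.
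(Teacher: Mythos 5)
Your proof is correct and follows essentially the same route as the paper: the paper's argument with the optimal partition $S^*$ (comparing $\tilde{C}$ to the cut value of $S^*$ under the perturbed matrix, then swapping the roles of $\mA$ and $\tilde{\mA}$) is exactly the hands-on version of your ``pointwise minimum of $1$-Lipschitz functions is $1$-Lipschitz'' lemma, with each cut value affine in $A_{k,l}$ with slope $0$ or $1$. Your packaging via the general min-of-Lipschitz-functions fact is just a slightly more abstract phrasing of the same comparison.
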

\begin{proof}
This simply follows from the fact that the change of $C$ cannot exceed the perturbation of $A_{i,j}$ because of the summation in \eqref{Csgl}. More specifically, we proceed as follows. Fix two different indices $i_0$ and $j_0$. For the adjacency matrix $\mA$, let $S^*$ and $\overline{S^*}$ be a partition of $\gV$ with which the minimum in \eqref{Csgl} is achieved, that is, $C = \sum_{\vxi_{i}\in S^*,\vxi_{j}\in \overline{S^*}} A_{i,j}$. Consider the perturbed adjacency matrix $\tilde{\mA}$ where $\tilde{A}_{i,j} = A_{i,j}$ if $(i,j) \neq (i_0, j_0)$; and $\tilde{A}_{i_0, j_0} \neq A_{i_0, j_0}$. Let $\tilde{C}$ denote the max-flow corresponding to $\tilde{\mA}$. If $\tilde{A}_{i_0, j_0} > A_{i_0, j_0}$, it is clear that
\begin{equation}
    \begin{aligned}
     \tilde{C} & = \min_{ \{ S: \vxi_{1}\in S,\vxi_{n}\in \bar{S} \} } \sum_{\vxi_{i}\in S,\vxi_{j}\in \bar{S}} \tilde{A}_{i,j} \\
     & \leq \sum_{\vxi_{i}\in S^*,\vxi_{j}\in \overline{S^*}} \tilde{A}_{i,j},
    \end{aligned}
\end{equation}
and thus
\begin{equation}
    \begin{aligned}
     & \quad \abs{\tilde{C} - C} = \tilde{C} - C \\
     & \leq \sum_{\vxi_{i}\in S^*,\vxi_{j}\in \overline{S^*}} \tilde{A}_{i,j} - \sum_{\vxi_{i}\in S^*,\vxi_{j}\in \overline{S^*}} A_{i,j} \\
     & = \tilde{A}_{i_0, j_0} - A_{i_0, j_0} = \abs{\tilde{A}_{i_0, j_0} - A_{i_0, j_0}}.
    \end{aligned}
\end{equation}
On the other hand, if $\tilde{A}_{i_0, j_0} < A_{i_0, j_0}$, then we consider $\mA$ as a perturbed version of $\tilde{\mA}$. For the same reason as above, 
\begin{equation}
    \abs{\tilde{C} - C} \leq \abs{A_{i_0, j_0} - \tilde{A}_{i_0, j_0}}.
\end{equation}
Combining both cases, we conclude that $C$ is a $1$-Lipschitz function of $A_{i,j}$. 
\end{proof}

As a composition of Lipschitz functions, $C$ itself is Lipschitz with respect to the locations $\vxi_i$ of the nodes as well (see e.g., \cite[Lemma 2.1]{kim2021lipschitz}). Accordingly, Rademacher's Theorem \cite[Theorem 3.1.6]{federer2014geometric} immediately yields the following result.
\begin{lemma}
\label{lemma_C}
For $i = 2, \dots, n-1$, $C$ is a differentiable function of $\vxi_i$ almost everywhere.
\end{lemma}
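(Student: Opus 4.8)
The plan is to establish that $C$, regarded as a function of $\vxi_i$ with all other node locations held fixed, is Lipschitz continuous, and then to invoke Rademacher's theorem exactly as anticipated in the sentence preceding the statement. So fix $i \in \{2,\dots,n-1\}$, freeze $\{\vxi_j\}_{j \ne i}$, and write $g(\vxi_i) := C$ for the resulting function of the single variable $\vxi_i \in \R^2$.

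The first step is to bound the oscillation of $g$. Moving $\vxi_i$ only changes the $i$-th row and $i$-th column of $\mA$, i.e.\ the entries $A_{i,j}$ and $A_{j,i}$ for $j \ne i$, each of which is $f_{\rm cap}$ evaluated at $\vxi_i$ and some fixed $\vxi_j$; let $L$ be a Lipschitz constant of $f_{\rm cap}$ in its first argument. Given $\vxi_i, \vxi_i' \in \R^2$, I would pass from the adjacency matrix at $\vxi_i$ to the one at $\vxi_i'$ by perturbing these $2(n-1)$ entries one at a time, apply Lemma~\ref{lemma:1-lip} to each single-entry perturbation, and sum via the triangle inequality to obtain
\begin{equation}
  \abs{g(\vxi_i) - g(\vxi_i')} \;\le\; \sum_{j \ne i} \bigl( \abs{A_{i,j} - A_{i,j}'} + \abs{A_{j,i} - A_{j,i}'} \bigr) \;\le\; 2(n-1)\,L\,\norm{\vxi_i - \vxi_i'} ,
\end{equation}
so $g$ is globally Lipschitz on $\R^2$. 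An alternative route that sidesteps Lemma~\ref{lemma:1-lip} entirely: since $\vxi_1 \in S$ and $\vxi_n \in \bar S$ are fixed, the minimum in \eqref{Csgl} runs over a \emph{finite} family of cuts, each cut value is a finite sum of Lipschitz functions of $\vxi_i$, and a pointwise minimum of finitely many Lipschitz functions is Lipschitz with the same constant.

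The second step is to quote Rademacher's theorem \cite[Theorem 3.1.6]{federer2014geometric}: a Lipschitz map from an open subset of $\R^2$ into $\R$ is differentiable Lebesgue-almost-everywhere. Applied to $g$ this yields the claim, and repeating the argument for each $i = 2,\dots,n-1$ finishes the lemma; the composition-of-Lipschitz-maps bookkeeping underlying Step~1 is essentially \cite[Lemma 2.1]{kim2021lipschitz}.

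I do not expect a genuine obstacle here — the statement is an immediate consequence of Lemma~\ref{lemma:1-lip}, the assumed Lipschitz continuity of $f_{\rm cap}$, and Rademacher's theorem. The only place warranting minor care is the accounting in Step~1: confirming that both coordinates of $\vxi_i$ are captured by varying the full $i$-th row and column of $\mA$, and that the per-entry $1$-Lipschitz estimate accumulates into a single finite constant rather than degenerating. Once that is in hand, the measure-zero exceptional set is delivered for free by Rademacher.
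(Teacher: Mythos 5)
Your proposal is correct and takes essentially the same route as the paper: establish that $C$ is Lipschitz in $\vxi_i$ by combining Lemma~\ref{lemma:1-lip} with the assumed Lipschitz continuity of $f_{\rm cap}$ (the composition step the paper delegates to \cite[Lemma 2.1]{kim2021lipschitz}), then invoke Rademacher's theorem. Your explicit $2(n-1)L$ entry-by-entry bookkeeping, and the aside that $C$ is a finite minimum of Lipschitz cut values, simply spell out in detail what the paper compresses into the citation.
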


With the above notations, to optimize the throughput, we need to resolve the following problem.
\begin{problem}
\label{problem_maxflow}
For fixed source $\vxi_1$ and destination $\vxi_n$,
find
\begin{equation}\label{eq:max_max_flow}
    \vxi_2^*, \cdots, \vxi_{n-1}^* = \argmax C,
\end{equation}
where the maximum is taken over a designated domain $\gD$ (i.e. the deployable region of the relay nodes) of $\vxi_2, \cdots, \vxi_{n-1}$.
\end{problem}
We remark that this problem is permutation invariant. That is, if $(\vxi_2^*, \cdots, \vxi_{n-1}^*)$ is an optimizer of \eqref{eq:max_max_flow}, then its any permuted version 
$(\vxi_{\Upsilon(2)}^*, \cdots, \vxi_{\Upsilon(n-1)}^*)$, where $\Upsilon$ is a bijection so that $\{\Upsilon(2), \dots, \Upsilon(n-1)\} = \{2, \dots, n-1\}$, is also an optimizer.

To solve Problem~\ref{problem_maxflow}, we employ the following iterative strategy.
Start with certain positions $\vxi_2[0], \cdots, \vxi_{n-1}[0]$, and update the relay nodes along ascending directions (e.g., the directions of gradients) of the objective function $C$ with a small stepsize $\zeta$. The node locations after $\mathfrak{t}$ updates are denoted by $\vxi_2[\mathfrak{t}], \cdots, \vxi_{n-1}[\mathfrak{t}] $, for $\mathfrak{t} = 1, \dots, \mathfrak{T}$, where $\mathfrak{T}$ is the step count. These directions will be provided by a GNN as illustrated below.

To construct the corresponding GNN, we consider the graph $\gG = \gG(\gV, \gE, \mA)$ defined in Section \ref{sec:problem_maxflow}. 
For each node, we define its features to be a three-dimensional vector $\vx_i = (x_i^{(0)}, x_i^{(1)}, x_i^{(2)}), i = 1,\dots,n$. 
Here, the first coordinate $x_i^{(0)}=1$ if the node is either the source or the destination (i.e., when $i=1$ or $i=n$) and $x_i^{(0)}=0$ otherwise. 
The coordinates $(x_i^{(1)}, x_i^{(2)})$ are exactly the location $\vxi_i$ of the $i$-th node. 
We define an $n \times 3$ matrix $\mX$, whose $i$-th row is given by $\vx_i$, as a compact representation of the node features.
With the above notations, all the information of the wireless communication network is abstracted into the attributed graph $\gG$. We use $\gG$ and $\mX$ as the inputs of the GNN. 
Throughout this paper, we use $g$, a function of both $\mX$ and $\mA$, to denote the functional expression of a GNN. 

\subsection{Na\"{i}ve Approach: Gradient Learning (GL) with GNN}\label{subsec:naive_approach}
Since the gradients of $C$ with respect to the locations $\vxi_2, \dots, \vxi_{n-1}$ exist almost everywhere by Lemma~\ref{lemma_C} and provide a direction of increasing $C$, a na\"{i}ve approach is to fit these gradients with a GNN and then update the relay nodes according to the output (i.e. the estimation of the gradients) provided by this GNN.
However, the desired training data of this GNN are not readily available, as it is quite challenging to directly compute the gradients of $C$ based on the combinatorial expression \eqref{Csgl}.
For this reason, instead of finding the exact gradients, we train this GNN to find unit vectors $\vu_2, \dots, \vu_{n-1}$ that lead to increment of $C$ and update the node locations from $(\vxi_2, \dots, \vxi_{n-1})$ to $(\vxi_2+\zeta \vu_2, \dots, \vxi_{n-1}+\zeta \vu_{n-1})$ at each iteration. 
As discussed in Section \ref{subsec:dataset_creation}, these unit vectors can be found via reinforcement learning with sufficient exploration capabilities.
Therefore, the problem of learning gradient with GNN can be treated as a node-level regression task, to be solved by the GNN $g$. 
Specifically, this GNN outputs an $(n-2) \times 2$ matrix $g(\mX, \mA)$, whose rows are normalized to obtain the aforementioned unit vectors as follows:
\begin{equation}
\displaystyle \mY = [\vu_{i}]_{i=2}^{n-1} = \left[\frac{g_{i}(\mX, \mA)}{\norm{g_{i}(\mX, \mA)}}\right]_{i=2}^{n-1},
\end{equation}
where $\vu_i$ is the $(i-1)$-th row of the $(n-2) \times 2$ matrix $\mY$ and $g_i(\mX, \mA)$ is the $(i-1)$-th row of $g(\mX, \mA)$, respectively. 
We supervise the training of GNN by minimizing the mean squared error (MSE) $\norm{\mY - \tilde{\mY}}_{\rm F}^2$, where $\tilde{\mY}$ is the matrix of the unit vectors provided by the training data obtained from reinforcement learning. 
After this GNN is trained, in the test phase, at each step $\mathfrak{t} = 0,\dots,\mathfrak{T}-1$, the relay node locations are updated as follows
\begin{equation}\label{eq:update_location}
    \vxi_i[\mathfrak{t}+1] = \vxi_i[\mathfrak{t}] + \zeta \vu_{i-1}[\mathfrak{t}], i = 2,\cdots, n-1.
\end{equation}
In \eqref{eq:update_location}, $\vu_{i-1}[\mathfrak{t}]$ is the ($i-1$)-th row of $\mY[\mathfrak{t}]$, which is the normalized version of $g(\mX[\mathfrak{t}], \mA[\mathfrak{t}])$.
An overview of this procedure is shown in Fig.~\ref{GNN_gradient_framework}.

\begin{figure}[!t]
  \centering
  \includegraphics[width=\columnwidth]{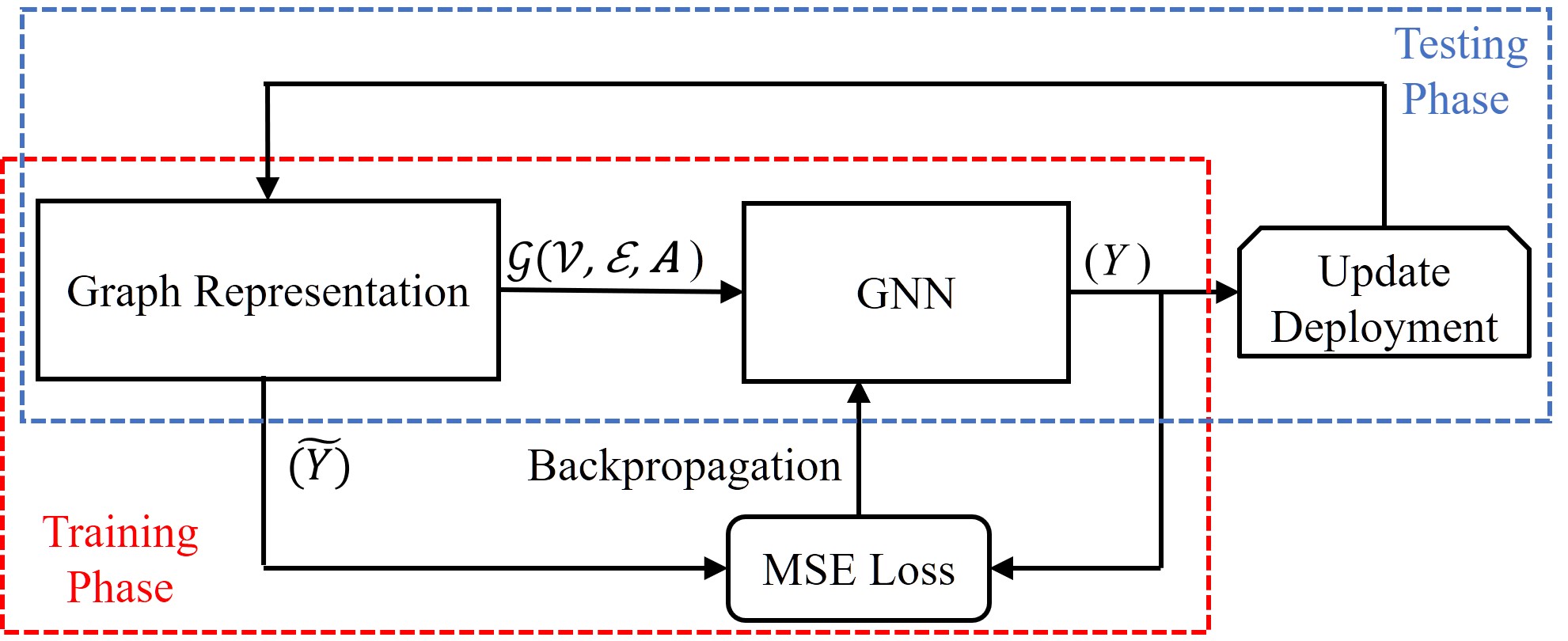}
  \caption{The framework of gradient learning. The red box illustrates the training phase, which learns the gradients with GNN; the blue box illustrates the testing phase, which updates the relay nodes to maximize the max-flow.}
  \label{GNN_gradient_framework}
\end{figure}

\section{Theoretic Basis}\label{sec:theoretic_bsais}
As will be shown in Section \ref{subsec:compar_GL}, the na\"{i}ve approach still cannot deliver satisfactory performance. 
To address this issue, we will describe an alternative approach later in Section \ref{sec:method}.
In particular, note that the value of $C$ can be efficiently computed by the Ford-Fulkerson algorithm \cite{ford1956maximal}. 
This motivates us to directly estimate the value of $C$ by a GNN, and then we use the gradients of this GNN to approximate the gradients of $C$ and update the relay nodes accordingly. 

Before presenting the details, some theoretic analysis is provided to validate the feasibility of this alternative approach.
Since $C$ is permutation invariant with respect to the node indices, we can achieve an arbitrarily small error of approximating $C$ using an expressive GNN, due to its capability of universal approximation \cite{chen2019equivalence}, which is a variant of the multi-layer perceptron (MLP) version \cite{hornik1989multilayer} modulo equivariance.\footnote{Even if we have a good approximation of the function, its gradient is not necessarily close to that of the original function in general. For instance, let $f(x) = c$ and $\tilde{f}(x) = c + \upsilon \sin(\iota x)$ where $c$ is a constant. Clearly, $\norm{f-\tilde{f}}_{\infty} = \upsilon$ while $\norm{f'-\tilde{f}'}_{\infty} = \upsilon \iota$. If we take a small $\upsilon$ but a large $\iota$, the gap between the function values will be small, but the norm of the difference between the gradients can be arbitrarily large.}

\subsection{Theoretic Analysis}\label{subsec:approx_both_func_grad}
We first state the following proposition on both function and gradient approximations for permutation-invariant ReLU neural networks.

\begin{proposition}\label{thm:approx}
Let $f: [0,1]^n \rightarrow \R$ be a differentiable permutation-invariant function such that $\nabla f$ is $L$-Lipschitz. Then for any $\epsilon, \delta > 0$, there exists an permutation-invariant ReLU neural network $g: [0,1]^n \rightarrow \R$ for which both $\abs{f(\vx) - g(\vx)} \leq \epsilon$ and $\norm{\nabla f(\vx) - \nabla g(\vx)} \leq \epsilon$ hold for any $\vx \in [0,1]^n-\sS$ where $\sS$ is a subset of $[0,1]^n$ with Lebesgue measure $m(\sS)<\delta$.
\end{proposition}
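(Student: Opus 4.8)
The plan is to reduce the statement to the construction of a continuous piecewise-linear (CPWL) approximant of $f$, because every CPWL function $\R^n \to \R$ (and hence its restriction to $[0,1]^n$) is realized \emph{exactly} by a ReLU network, by the representation theorem of Arora--Basu--Mianjy--Mukherjee. The delicate point, as the footnote stresses, is that uniform closeness of $g$ to $f$ says nothing about $\nabla g$; this is exactly where the hypothesis that $\nabla f$ is $L$-Lipschitz (i.e.\ $f \in C^{1,1}$) is used, through the classical finite-element estimate that piecewise-linear interpolation on a shape-regular mesh of size $h$ has gradient error $O(h)$.

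First I would fix a triangulation $\mathcal{T}_h$ of $[0,1]^n$ of mesh size $h=1/N$: subdivide the cube into $N^n$ subcubes of side $h$ and apply the Freudenthal--Kuhn triangulation to each. All simplices of $\mathcal{T}_h$ are congruent up to coordinate permutations and reflections, hence uniformly shape-regular with a constant depending only on $n$, and $\mathcal{T}_h$ is invariant under the $S_n$-action on coordinates. Let $I_h f$ be the piecewise-linear Lagrange interpolant of $f$ at the vertices of $\mathcal{T}_h$. Then $I_h f$ is CPWL; it is permutation-invariant since $f$ is and $\mathcal{T}_h$ is $S_n$-invariant; a second-order Taylor estimate using the $L$-Lipschitz bound on $\nabla f$ gives $\norm{f - I_h f}_\infty \le c_n L h^2$; and on the interior of each simplex $T$ the gradient of $I_h f$ is a constant vector $\vg_T$ determined by $\vg_T\cdot(v-v') = f(v)-f(v')$ over vertices of $T$, so writing $f(v)-f(v') = \nabla f(\vx)\cdot(v-v') + O(Lh^2)$ for $\vx\in T$ and inverting the (uniformly well-conditioned, after scaling by $h$) edge matrix of $T$ yields $\norm{\nabla f(\vx) - \nabla (I_h f)(\vx)} \le c_n' L h$ for every $\vx$ in the interior of $T$. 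The set $\mathcal{S}$ of points where $I_h f$ fails to be differentiable is the union of the $(n-1)$-dimensional faces of $\mathcal{T}_h$, which has Lebesgue measure $0<\delta$. Choosing $N$ so that $c_n L h^2 \le \epsilon$ and $c_n' L h \le \epsilon$, and taking $g := I_h f$ (extended to $\R^n$ if desired by precomposing with the $S_n$-equivariant CPWL coordinatewise clipping onto $[0,1]^n$, which changes nothing on the cube and keeps $g$ a permutation-invariant ReLU network) proves the claim.

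An alternative that also explains why the statement carries a tolerance $\delta$ rather than ``measure zero'' is to prove the non-invariant version first and then symmetrize: the same interpolation argument on any shape-regular mesh yields a ReLU network $\hat g$ with both estimates holding off the mesh skeleton $\hat{\mathcal{S}}$, and then $g(\vx) := \frac{1}{n!}\sum_{\sigma\in S_n}\hat g(P_\sigma \vx)$ is a permutation-invariant ReLU network for which, using $\nabla f(\vx) = P_\sigma^{\T}\nabla f(P_\sigma\vx)$, one gets $\norm{\nabla f(\vx)-\nabla g(\vx)} \le \frac{1}{n!}\sum_\sigma \norm{\nabla f(P_\sigma\vx)-\nabla\hat g(P_\sigma\vx)} \le \epsilon$ whenever $\vx\notin\mathcal{S}:=\bigcup_\sigma P_\sigma^{-1}\hat{\mathcal{S}}$, with $m(\mathcal{S})\le n!\,m(\hat{\mathcal{S}})<\delta$ provided $\hat g$ is built so that $m(\hat{\mathcal{S}})<\delta/n!$; the function estimate passes through the average trivially.

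I expect the main obstacle to be the gradient estimate $\norm{\nabla f - \nabla(I_h f)} \le c_n' L h$ on simplex interiors with a constant depending only on $n$: this needs the full force of $\nabla f$ being Lipschitz together with uniform shape-regularity of $\mathcal{T}_h$ (so that the edge matrices of the simplices have condition numbers bounded independently of $h$), and it is the only place where anything beyond continuity of $f$ enters. A secondary technical point is confirming that the refined Freudenthal--Kuhn triangulation is genuinely $S_n$-invariant so that $I_h f$ inherits permutation-invariance; anyone uneasy about this can instead take the symmetrization route of the previous paragraph at the cost of the harmless $n!$ factor in the measure bound.
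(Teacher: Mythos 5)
Your proposal is correct, and while it shares the paper's overall skeleton (build a continuous piecewise-linear approximant of $f$, control the gradient error via the $L$-Lipschitzness of $\nabla f$, realize the result exactly as a ReLU network, and enforce permutation invariance), the construction of the piecewise-linear approximant is genuinely different. The paper does not interpolate: it places the first-order Taylor expansion of $f$ at the center of each of $N^n$ boxes, restricted to a slightly shrunken box, and glues these patches through an uncontrolled nodal-basis interpolation on thin margins; permutation invariance is then obtained by averaging over all $n!$ permutations of the input, and the exceptional set $\sS$ is exactly the union of margins, which is why the statement carries a tolerance $\delta>0$ on its measure. On each good box the gradient of the approximant is literally $\nabla f(\vc_{i_1,\dots,i_n})$, so the gradient bound $L\sqrt{n}/N$ falls out of Lipschitzness with no mesh or conditioning analysis. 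Your route instead takes the Lagrange interpolant on an $S_n$-invariant Freudenthal--Kuhn simplicial mesh and invokes the standard finite-element estimates ($O(h^2)$ in value, $O(h)$ in gradient on simplex interiors, with constants controlled by shape regularity), plus the Arora--Basu--Mianjy--Mukherjee exact-representation theorem where the paper cites the analogous result of He et al. What your approach buys is a strictly stronger conclusion — the exceptional set is only the mesh skeleton, of measure zero, so $\delta$ is satisfied automatically — and, via the symmetric mesh, invariance without averaging; what it costs is the extra machinery (verifying $S_n$-invariance and face-compatibility of the refined Kuhn triangulation, and the uniform conditioning of the edge matrices), which is exactly the point you flag, and for which your symmetrization fallback (identical in spirit to the paper's averaging step, with the harmless $n!$ factor in the measure bound) is a valid repair. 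Both arguments are sound; the paper's is more elementary at the price of a positive-measure bad set, yours is sharper at the price of finite-element bookkeeping.
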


\begin{proof}
We first construct a continuous piecewise linear function that approximates $f$. For some $N \in \sN$ that we will determine later, divide $[0,1]^n$ into $N^n$ boxes of equal size, denoted by $\sB_{i_1, \dots, i_n}, i_1, \dots, i_n \in \{1,\dots,N\}$, where
\begin{align*}
    \sB_{i_1, \dots, i_n} \overset{\bigtriangleup }{= } \Bigg\{ \vx \in [0,1]^n: \frac{i_l-1}{N} \leq x_l \leq \frac{i_l}{N}, l = 1, \dots, n \Bigg\}.
\end{align*}
We also define $\tilde{\sB}_{i_1, \dots, i_n}$, for $i_1, \dots, i_n \in \{1,\dots,N\}$ by
\begin{align*}
    \tilde{\sB}_{i_1, \dots, i_n} \overset{\bigtriangleup }{= } \Bigg\{ \vx \in [0,1]^n: \frac{i_l-1+\omega}{N} \leq x_l & \leq \frac{i_l-\omega}{N}, \\ l &= 1, \dots, n \Bigg\},
\end{align*}
for some $\omega > 0$ that we determine later. We denote the centers of the boxes $\sB_{i_1, \dots, i_n}$, which at the same time are the centers of $\tilde{\sB}_{i_1, \dots, i_n}$, to be $\displaystyle \vc_{i_1, \dots, i_n} = \left( \frac{i_1-1/2}{N}, \dots, \frac{i_n-1/2}{N} \right)$. 

Let $\tilde{h}: [0,1]^n \rightarrow \R$ be defined by 
\begin{equation}
    \tilde{h}(\vx) = 
    \begin{cases}
        \tilde{h}_{i_1, \dots, i_n}(\vx) & ,~\text{if}~ \vx \in \tilde{\sB}_{i_1, \dots, i_n} \\
        0 & ,~\text{otherwise}
    \end{cases}
    ,
\end{equation}
where 
\begin{equation}
    \tilde{h}_{i_1, \dots, i_n}(\vx) = f(\vc_{i_1, \dots, i_n}) +  \nabla f(\vc_{i_1, \dots, i_n})^\T (\vx - \vc_{i_1, \dots, i_n}).
\end{equation}
Clearly $\tilde{h}$ is linear on $\tilde{\sB}_{i_1,\dots,i_n}$ and permutation-invariant. Moreover, there exists a piecewise linear function $h$ such that $h = \tilde{h}$ on $\tilde{\sB}_{i_1,\dots,i_n}$, for instance, by interpolating the vertex values using nodal basis functions \cite[Ch.~3]{brenner2008mathematical} on $\sB_{i_1,\dots,i_n} - \tilde{\sB}_{i_1,\dots,i_n}$. Although $h$ is not necessarily permutation invariant, we construct a function $g: [0,1]^n \rightarrow \R$ such that
\begin{equation}
    g(\vx) = \frac{1}{n!} \sum_{\vx^\dagger \in \gS(\vx)} h(\vx^\dagger),
\end{equation}
where $\gS(\vx)$ is the set of all permutations of $\vx$. Clearly, $g$ is permutation-invariant and its value agrees with $h$ on each $\tilde{\sB}_{i_1,\dots,i_n}$ since $h$ is permutation-invariant on $\tilde{\sB}_{i_1,\dots,i_n}$. Moreover, on each $\tilde{\sB}_{i_1,\dots,i_n}$, we have on one hand
\begin{equation}
\begin{aligned}
    \norm{f(\vx) - g(\vx)} &\leq \norm{\nabla f} \norm{\vx - \vc_{i_1,\dots,i_n}} \\
    &\leq M \frac{\sqrt{n}}{N},
\end{aligned}
\end{equation}
where $M$ is a uniform upper bound of $\nabla f$ on $[0,1]^n$, which is finite due to the compactness of $[0,1]^n$.
On the other hand,
\begin{equation}
\begin{aligned}
    \norm{\nabla f(\vx) - \nabla g(\vx)} &= \norm{\nabla f(\vx) - \nabla f(\vc_{i_1,\dots,i_n})} \\
    &\leq L \norm{\vx - \vc_{i_1,\dots,i_n}} \leq L \frac{\sqrt{n}}{N}.
\end{aligned}
\end{equation}
Given $\epsilon, \delta > 0$, we choose $\displaystyle N = \left\lceil \frac{\epsilon}{\min\{M, L\} \sqrt{n}} \right\rceil + 1$. We also choose $\omega$ so small that $m(\sS) < \delta$ where 
\begin{equation}
\label{eq:s}
    \sS = [0,1]^n - \bigcup_{i_1, \dots, i_n = 1}^N \tilde{\sB}_{i_1,\dots,i_n}.
\end{equation}
Lastly, by \cite[Theorem 5.2]{he2020relu}, $g$ can be implemented by a ReLU neural network with $\lceil \log_2 (n+1) \rceil$ layers.
\end{proof}

We remark that the domain $[0,1]^n$ of $f$ can be replaced by any compact permutation-invariant domain, which is the case in the considered node deployment problem.

Proposition~\ref{thm:approx} indicates that there exists a GNN such that, the GNN value is a close approximation of the function value of $C$ in \eqref{Csgl}, and at the same time, the gradients of this GNN well approximate the gradient of $C$ since
GNN can be a practical implementation of the permutation invariant neural network $g$. 
In GNN, permutation-invariance is guaranteed by using graph convolutional layers, also known as message passing layers.
In general, a graph convolutional layer can be represented as
\begin{equation}\label{eq:graph_conv_layer}
    \vx_i^{\prime} = \gamma_{\vtheta} \left( \vx_i,
    {\rm AGG}_{j \in \mathcal{N}(i)} \, \phi_{\vtheta}
    \left(\vx_i, \vx_j, A_{i,j}\right) \right),
\end{equation}
where $\mathcal{N}(i)$ denotes the neighbors of $i$, ${\rm AGG}$ is a permutation invariant aggregation function (e.g., sum), and $\gamma_{\vtheta}$ and $\phi_{\vtheta}$ denote differentiable functions such as fully-connected networks. 
Graph convolutional layers in the form of \eqref{eq:graph_conv_layer} all enjoy permutation equivariance, but have different expressivity, and thus different approximation power.
It is important to take a more expressive GNN to guarantee that a wide range of permutation-invariant neural networks can be represented, as studied in \cite{chen2019equivalence, azizian2021expressive, jegelka2022theory}. 
To this end, we adopt the graph convolutional layer from \cite{morris2019weisfeiler} (denoted as GraphConv), which is a higher-order GNN that considers multi-scale graph structures and proves more expressive than the Weisfeiler-Leman test. Specifically, the message passing operation in GraphConv is represented as
\begin{equation}
\label{graph_conv}
    \vx^{\prime}_i = \mW_1 \vx_i + \mW_2
\sum_{j \in \mathcal{N}(i)} A_{i,j} \cdot \vx_j,
\end{equation}
where $ \mW_{1} $ and $ \mW_{2} $ are matrices whose entries are trainable.
We will also compare the performance with a less expressive GNN in ablation studies to validate this choice.
On the other hand, to facilitate differentiation, especially for approximation of the gradients, we use the Gaussian Error Linear Unit (GELU) \cite{hendrycks2016gaussian} as the nonlinear activation function instead of ReLU in our experiments. We remark that GELU is very close to ReLU in value but much smoother.

\subsection{Numerical example}\label{subsec:numerical_example}
In this section, we will show synthetic numerical examples in which we use GNN to approximate multivariate functions and study their gradients. 

We consider a simple fully-connected graph with three nodes, associated with an input features matrix $\mX \in \R^{3 \times 2}$ and a scalar output $y \in \R$. We assume that the true value of $y$ is an analytical function of $\mX$. Specifically, we consider the following two functions
\begin{equation}
\begin{aligned}
    f_1(\mX) &= x_{11}^{2} +  x_{12}^{2}  +  x_{21}^{2} +  x_{22}^{2} +  x_{31}^{2} +  x_{32}^{2},\\
    & \qquad \qquad x_{11},x_{12},x_{21},x_{22},x_{31},x_{32} \in (1,4);
\end{aligned}
\end{equation}
and
\begin{equation}
\begin{aligned}
    f_2(\mX) &= x_{11} \cdot  x_{12}  +  x_{21} \cdot  x_{22} +  x_{31} \cdot  x_{32},\\
    & \qquad \qquad x_{11},x_{12},x_{21},x_{22},x_{31},x_{32} \in (1,4).
\end{aligned}
\end{equation}

For each $f_i$, $i=1,2$, we take a collection of $30,000$ training examples of pairs $(\mX, f_i(\mX))$. In each training example, we randomly sample $x_{11},x_{12},x_{21},x_{22},x_{31},x_{32}$ independently from the uniform distribution ${\rm Unif}(1,4)$ and then compute $y = f_i(\mX)$. We also collect $500$ testing examples in the same manner. We then train a GNN with a cascading of the following layers:
\begin{itemize}
    \item Input layer: 2-dimensional;
    \item $3 \times$ GraphConv layers: 32-dimensional;
    \item Global addition pooling (global-add-pool);
    \item $2 \times$ Linear layers: 32-dimensional;
    \item Output layer: 1-dimensional.
\end{itemize}
Between the GraphConv layers, we use the GELU activation function and the GraphSizeNorm \cite{dwivedi2020benchmarking}
where we apply normalization over each individual graph in a batch of node features.
Moreover, the activation function between the linear layers is also GELU.
For each setting, we fit the GNN to the function value with a learning rate of $0.002$, and a batch size of $100$, for $1,000$ epochs.

Suppose the trained GNN is denoted by $g$. For each testing example $\mX$, we compute both $g(\mX)$, the output value of GNN, and $\nabla g(\mX)$, the gradients of GNN with respect to the input variables. To evaluate the performance of GNN, we use the maximum relative error, defined as
\begin{equation}
\begin{aligned}
\max_\mX \frac{\abs{g(\mX) - f_i(\mX)}}{\abs{f_i(\mX)}} 
\end{aligned}
\end{equation}
for the function value and 
\begin{equation}
\begin{aligned}
\max_\mX \frac{\abs{\partial_{x_{kl}} g(\mX) - \partial_{x_{kl}} f_i(\mX)}}{\abs{\partial_{x_{kl}} f_i(\mX)}}, \quad k, l \in \{1,2,3\}
\end{aligned}
\end{equation}
for the gradients, where the maximum is taken over the collection of the testing examples in each case.

\begin{table}[!t]
  \caption{The maximum relative errors between GNN and function outputs}
  \label{maximum_relative_error}
  \centering
  \renewcommand\arraystretch{1.25}
  \setlength{\tabcolsep}{0.2in}{
  \begin{tabular}{|c||c|c|}
    \hline
    ~                                      & \multicolumn{2}{c!{\vrule}}{maximum relative error}  \\ 
    \hline
    $i$                               & $1$ & $2$                                          \\ 
    \hline
    $f_i$                     & 0.11\%     & 0.24\%                                               \\ 
    \hline
    $\partial_{x_{11}}f_i$  & 1.32\%     & 2.15\%                                               \\ 
    \hline
    $\partial_{x_{12}}f_i$ & 2.82\%     & 1.58\%                                               \\ 
    \hline
    $\partial_{x_{21}}f_i$ & 3.04\%     & 1.83\%                                               \\ 
    \hline
    $\partial_{x_{22}}f_i$ & 6.15\%     & 1.74\%                                               \\ 
    \hline
    $\partial_{x_{31}}f_i$ & 3.76\%     & 6.03\%                                               \\ 
    \hline
    $\partial_{x_{32}}f_i$ & 2.07\%     & 5.89\%                                               \\
    \hline
\end{tabular}}
\end{table}

\begin{figure}[!t]
  \centering
  \subfloat[]{\includegraphics[width=1.75in]{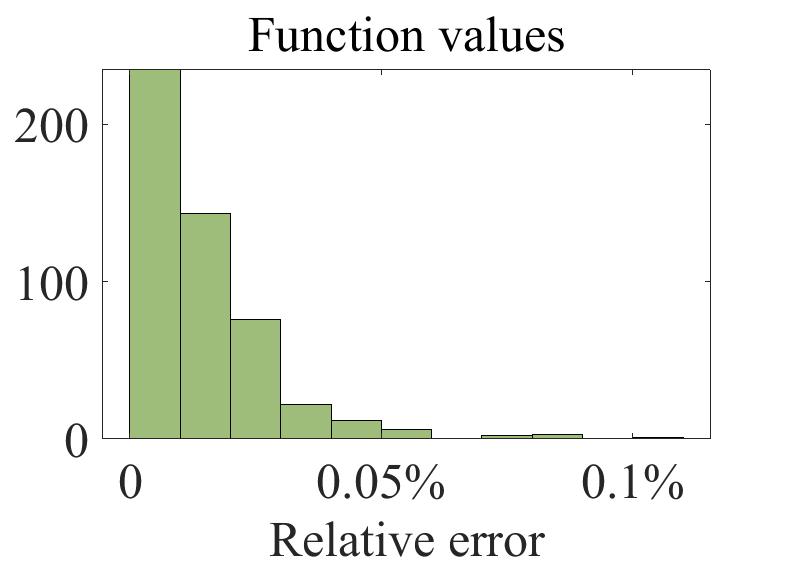}%
  \label{function1_value_error}}
  \subfloat[]{\includegraphics[width=1.75in]{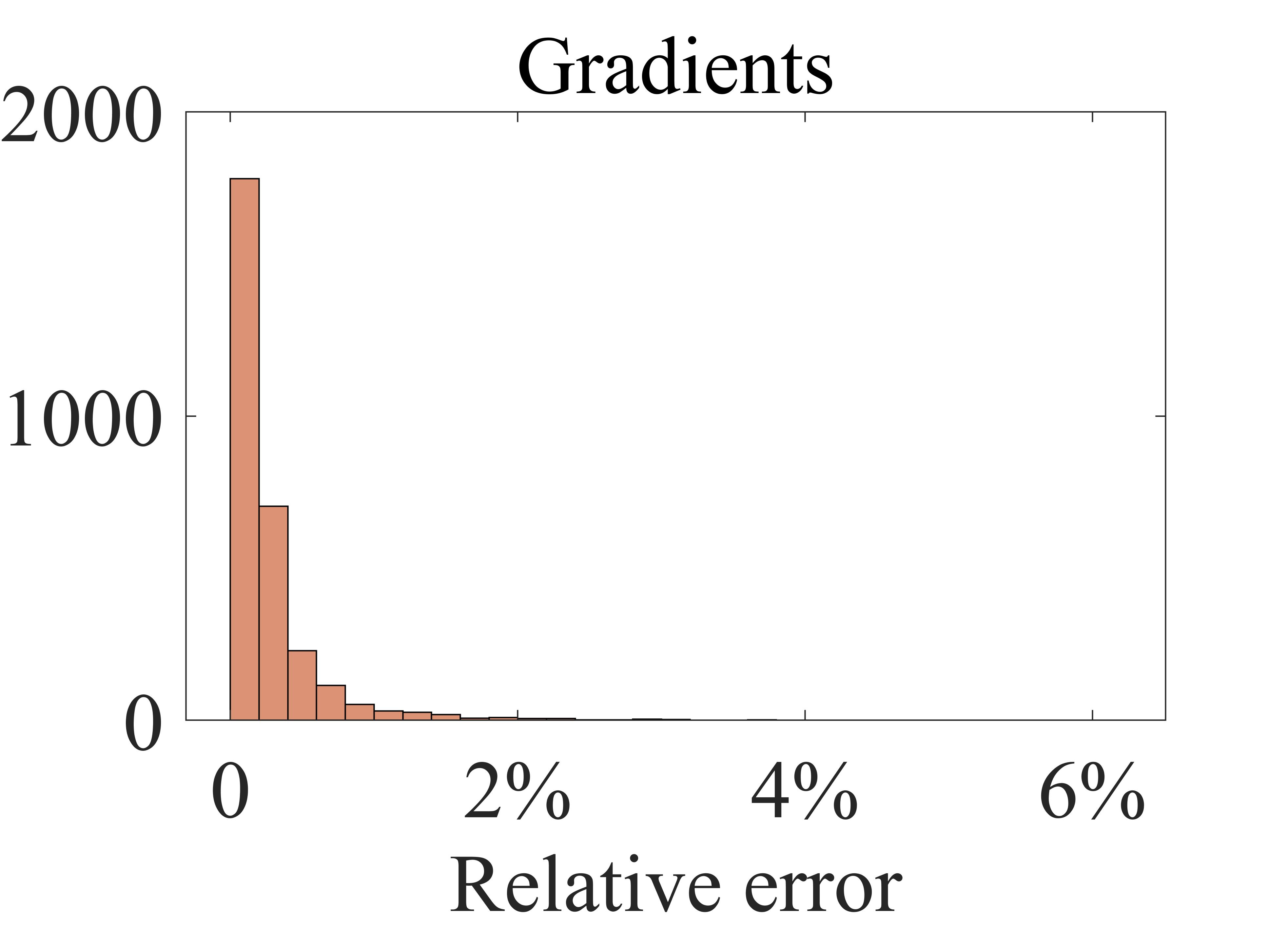}%
  \label{function1_gradient_error}}
  \hfil
  \subfloat[]{\includegraphics[width=1.75in]{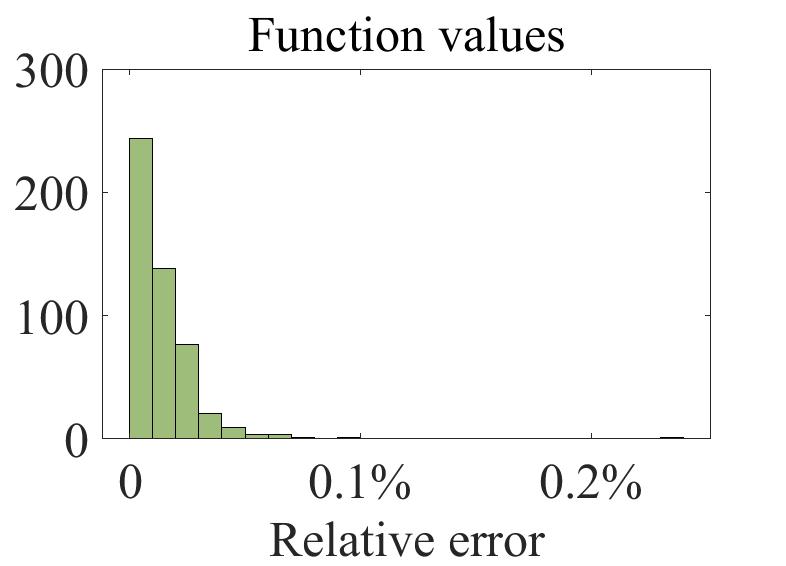}%
  \label{function2_value_error}}
  \subfloat[]{\includegraphics[width=1.75in]{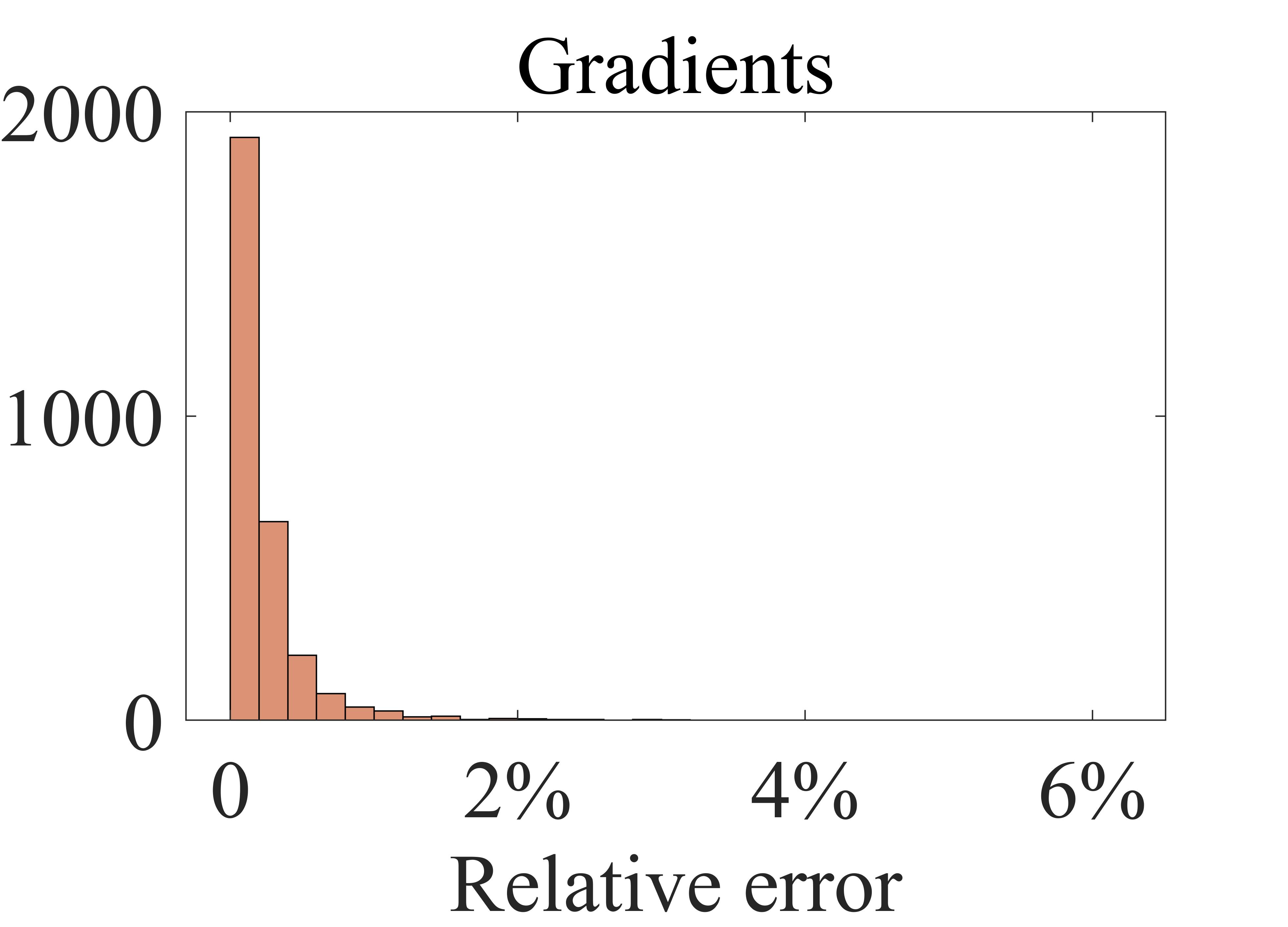}%
  \label{function2_gradient_error}}
  \caption{The distribution of the relative errors of function values and first-order partial derivatives. (a) Relative errors for $f_1$. (b) Relative errors for $\partial_{x_{kl}}f_1$ (collection of all $k,l$'s). (c) Relative errors for $f_2$. (d) Relative errors for $\partial_{x_{kl}}f_2$ (collection of all $k,l$'s).}
  \label{function}
\end{figure}

The experimental results are shown in Table~\ref{maximum_relative_error}, where we observe consistently small errors from the partial derivatives, albeit not included in the loss functions. For a more detailed discussion, we also show the distributions of the relative errors of function value and first-order partial derivatives in Fig.~\ref{function}.
For $f_1$, the maximum relative error is $0.11\%$, but most of the relative errors of function value are within $0.05\%$. 
The maximum relative error of the first-order partial derivatives is $6.15\%$, but $99.03\%$ of them are within $2\%$.
On the other hand, for $f_2$, the maximum relative error is $0.24\%$, but $96\%$ of the relative errors of function value are within $0.04\%$. The maximum relative error of the first-order partial derivatives is $6.03\%$, but $98.77\%$ of them are within $1.5\%$.
From these experimental results, we see that after fitting GNN to the function value, it can not only faithfully represent a multi-variable function but also well approximate its first-order partial derivatives. 

\section{Proposed Method}\label{sec:method}
\subsection{Max-Flow Learning (MFL) with GNN}\label{subsec:learn_maxflow}

Motivated by the above theoretic result, we propose to directly fit a GNN to the max-flow of the network and determine the update directions of the relay nodes by differentiating the GNN, as shown in Fig.~\ref{GNN_value_framework}.
Unlike the na\"{i}ve method of learning gradients, the learning objective of this approach is the scalar max-flow value of the network, which replaces the original node-level task to a graph-level task. 
\begin{figure*}[!t]
  \centering
  \includegraphics[width=.8\textwidth]{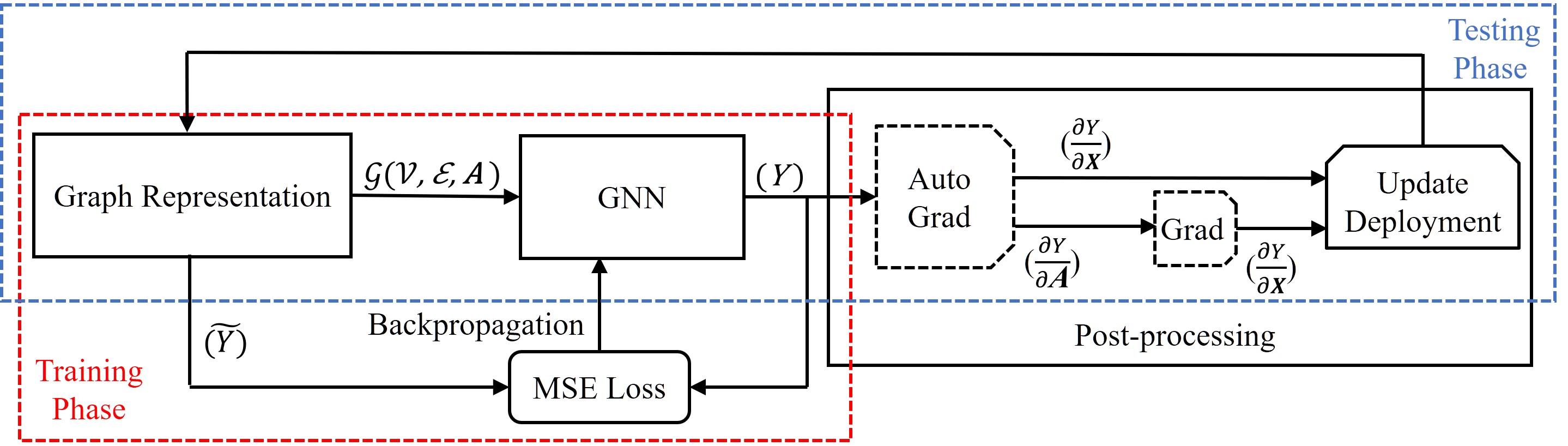}
  \caption{The framework of the proposed MFL method. The red box illustrates the training phase, which learns the max-flow with GNN; the blue box illustrates the testing phase, which updates the relay nodes to maximize the max-flow.}
  \label{GNN_value_framework}
\end{figure*}

After we fit the GNN to the max-flow value, the test phase is conducted as follows. First, the graph representations $\gG(\gV, \gE, \mA)$ of test node locations are fed into the trained GNN. Then, we perform auto-differentiation to obtain the gradients of the output with respect to both the node features $\mX$ and the weighted adjacency matrix $\mA$ 
\begin{equation}
    \frac{\partial g(\mX,\mA)}{\partial x_i^{(m)}} = \bigg\langle\frac{\partial g(\mX,\mA)}{\partial \mX}, \frac{\partial \mX}{\partial x_i^{(m)}}\bigg\rangle +  \bigg\langle\frac{\partial g(\mX,\mA)}{\partial \mA}, \frac{\partial \mA}{\partial x_i^{(m)}}\bigg\rangle,
\end{equation}
where $\ip{\cdot}{\cdot}$ is the Hilbert-Schmidt inner product and
\begin{equation}\label{A_gradient}
    \frac{\partial \mA}{\partial x_i^{(m)}} = \bigg[\frac{\partial A_{p,q}}{\partial x_i^{(m)}}\bigg]_{p,q=1}^{n}.
\end{equation}
Here, $\displaystyle \frac{\partial A_{p,q}}{\partial x_i^{(m)}} $ depends on the network of interest and an example is given in Section \ref{subsec:experiment}.
Finally, the relay nodes are updated along these directions with stepsize $\zeta$ as follows
\begin{equation}
    x_i^{(m)}[\mathfrak{t}+1] \leftarrow x_i^{(m)}[\mathfrak{t}] + \zeta \frac{\partial_i^{(m)} g(\mX[\mathfrak{t}], \mA[\mathfrak{t}])}{\norm{\partial_i^{(m)} g(\mX[\mathfrak{t}], \mA[\mathfrak{t}])}}, ~m = 1, 2,
\end{equation}
where $\partial_i^{(m)} g = {\partial g}/{\partial x_i^{(m)}}$.

\subsection{A Hybrid Method}\label{subsec:hybrid_method}
To further improve the performance of throughput optimization, we also study a hybrid method based on an existing method \cite{he2014dynamic} and the proposed MFL method.
\subsubsection{Existing method}\label{subsec:existing_method}
In spectral graph theory, the Cheeger constant of the non-normalized Laplacian matrix $ \mL $ is an approximation to the max-flow $C$ \cite{sherman2009breaking}. This Cheeger constant can be bounded by an inequality of the second smallest eigenvalue $\lambda_{2}(\mL)$ of the Laplacian matrix, which depends on the locations of the nodes \cite{bhattacharya2010graph}. 
To better capture the max-flow between the source node and the destination node \cite{he2014dynamic}, a weighted Cheeger constant (WCC) is employed in the approximation of $C$. Specifically, let  $\mW$ be the diagonal weight matrix, one defines the weighted Laplacian matrix to be $ \mL_{\mW} = \mW^{-1/2 } \mL \mW^{-1/2 }$.
The relay nodes can be then updated to maximize the second small eigenvalue, $\lambda_{2}(\mL_{\mW})$, of the weighted Laplacian matrix to optimize $C$.
 
\subsubsection{Hybrid method}\label{subsec:hybrid_method_sub}
The hybrid method adopts a greedy algorithm that chooses the one with larger increment of max-flow between the GNN-based MFL method and the WCC method in each step, as  shown in Fig.~\ref{Hybird_method}.
In particular, two different sets of gradients (i.e., the node update direction corresponding to the max-flow increase), Gradients 1 and Gradients 2, are obtained using the GNN-based MFL and WCC methods, respectively.
Two candidate deployments, Deployment 1 and Deployment 2, can be obtained according to these gradients. By comparison, the set of gradients that leads to a higher max-flow will be selected and the corresponding candidate deployment will be taken as the new deployment.
We keep iterating until a pre-specified number of steps is reached.
This hybrid method is expected to synthesize the advantages of both the GNN-based MFL method and the WCC method.

\begin{figure}[!t]
  \centering
  \includegraphics[width=2in]{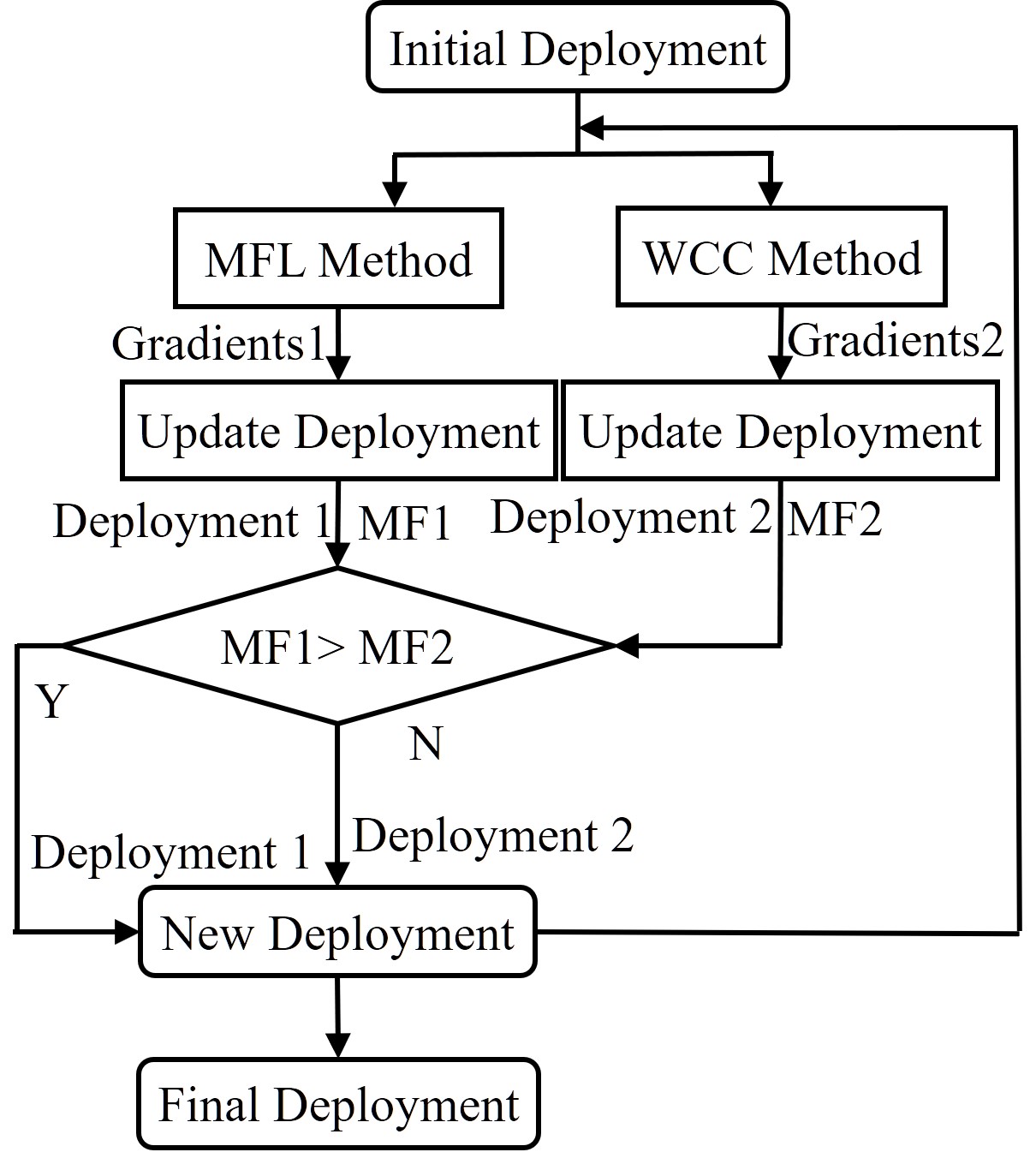}
  \caption{Flow chart of the hybrid method.}
  \label{Hybird_method}
\end{figure}

\subsection{Training Dataset Creation}\label{subsec:dataset_creation}
To train GNN to learn the value of $C$ (or, equivalently, the max-flow), it is vital to create a dataset that well represents max-flow values for different node locations. In particular, it is important that the training dataset contains some node locations corresponding to good max-flow values, so that the GNN will not underestimate the maximum possible value of the max-flow of the network.
To this end, we propose to use a reinforcement learning model based on a GNN implementation of PPO \cite{schulman2017proximal, heess2017emergence} to obtain suitable training samples. 
This algorithm has a strong exploration capability to find the node locations corresponding to some good max-flow values in the sampling space.\footnote{
We remark that PPO could also be used for optimizing the max-flow directly,
but we don’t expect a good performance due to the complexity of the search space (see Section \ref{subsec:compar_RL} for
a comparison of performance). Therefore, the application of PPO is only regarded as a pre-training step which provides us with a good variety of training data.}

In our context, the three key elements of the PPO algorithm are defined as follows:
\begin{itemize}
	\item state ($s$): the graph-structured data including node features $\mX$ and edge features $\mA$;
	\item action ($a$): a collection of two-dimensional vectors representing the update directions of the relay nodes;
	\item reward ($r$): the increment of the max-flow by taking an action at a given state.
\end{itemize}
The agent in the PPO algorithm consists of an actor (i.e. the policy) and a critic (i.e. the value function), both modeled as GNNs. The actor GNN provides a policy for choosing an action for a given state, while the critic GNN evaluates possible cumulative rewards for a state. 
In particular, given a current state as the input, the outputs of the actor GNN are used as means and variances of $ (n-2) \times 2 $ normal distributions $\mathrm{Norm}(\mu_{i,l}, \sigma_{i,l})$, $i = 2, \dots, n-1$, $l=1,2$. The update of the relay node $\vxi_{i}$, $i = 2, \dots, n-1$, is then a vector $(\Delta x_{i,1}, \Delta x_{i,2})$, sampled according to $\Delta x_{i,1} \sim \mathrm{Norm}(\mu_{i,1}, \sigma_{i,1})$ and $\Delta x_{i,2} \sim \mathrm{Norm}(\mu_{i,2}, \sigma_{i,2})$. On the other hand, the critic GNN outputs a deterministic scalar value as the cumulative rewards.

The PPO algorithm runs for $N$ epochs, and each epoch is divided into $T$ timeslots.
Here $T$ is equal to the product of the number of update steps $\mathfrak{T}$ and the number of updates from scratch $\varphi$.
For the $i$-th epoch, $i = 1,2,\dots,N$, we use $\pi _{\theta _{i}}$ to denote the actor GNN and $V_{\phi _{i}}$ to denote the critic GNN, where $\theta_i$ and $\phi_i$ denote the parameters in these two GNNs, respectively.
We also use $s_{t}$, $a_{t}$, and $r_{t}$ to denote the state, action, and reward at the $t$-th timeslot, $t=1,2,\dots,T$, respectively, where for brevity we omit their dependence on $i$.
Moreover, let $\pi _{\theta _{i} } \left ( a_{t} | s_{t}  \right ) $ denote the probability that the policy $\pi _{\theta _{i} }$ executes $a_{t}$ at a given $s_{t}$, and $s'_{t}$ denote the state after executing $a_{t}$ at $s_{t}$. 
With the above notations, we can represent the advantage $A_{t}\left (  \theta_{i} \right )$ as
\begin{equation}
  \label{advantage}
  A_{t}\left (  \theta_{i} \right ) = R_{t} - V_{\phi_{i} }\left ( s_{t}  \right ),
\end{equation}
where $ V_{\phi_{i} }\left ( s_{t}  \right ) $ is the current value function from the critic GNN, and 
\begin{equation}
  \label{targetV}
    R_{t} = r_{t} + \gamma V_{\phi_{i} }\left ( s'_{t}   \right ).
\end{equation}
Here, $\gamma$ is a hyperparameter. Accordingly, the loss function for updating the actor GNN is defined as
\begin{equation}
    \begin{aligned}
     \label{J}
  L_{\rm actor} ( \theta   ) =  \mathbb{E} [  \min  \{ &  \rho_{t} (  \theta  ) A_{t} (  \theta_{i}  ) ,  \\ & \mathrm {clip}   ( \rho_{t} (  \theta  ), 1- \tau  ,1 +  \tau   , A_{t} (  \theta_{i}  ))  \} ],
    \end{aligned}
\end{equation}
where $\mathbb{E}  \left[ \cdot\right ]$ is the empirical average over all samples in the buffer (stored the training data of PPO), $ \tau $ is a hyperparameter, and $\mathrm {clip}(\cdot,\cdot,\cdot,\cdot) $ is a clipping function defined as
\begin{equation}
  \label{eq:clip_function_def}
    \mathrm{clip}(a,b,c,d) = 
    \begin{cases}
    \min \{a,c\} \cdot d   & \text{ if } d>0 \\
    \max \{a,b\} \cdot d  & \text{ otherwise }
    \end{cases},
\end{equation}
and $\rho_t$ is a quotient of probability densities defined as
\begin{equation}
  \label{probability}
\rho_{t}\left (  \theta \right ) = \frac{\pi _{\theta  } \left ( a_{t} | s_{t}  \right ) }{\pi _{\theta _{i} } \left ( a_{t} | s_{t}  \right )}.
\end{equation}
On the other hand, the loss function for updating the critic GNN is defined as

\begin{equation}
  \label{L}
  L_{\rm critic}\left ( \phi  \right ) = \mathbb{E}  \left[ \mathfrak{L} \left ( V_{\phi }\left ( s_{t}  \right ) - R_{t}   \right )\right ],
\end{equation}
where $\mathfrak{L}$ is a piecewise function on $\R$ defined as
\begin{equation}
  \label{loss}
  \mathfrak{L} \left ( x   \right )= \begin{cases}
    0.5 \abs{x}  ^{2}   & \text{ if } \left | x  \right | < 1 \\
    \left |  x  \right |- 0.5  & \text{ otherwise }
    \end{cases}.
\end{equation}

\begin{algorithm}[h]
\caption{PPO \cite{schulman2017proximal, heess2017emergence}}\label{alg:alg1}
\begin{algorithmic}
\STATE
\STATE   \textbf{Initialize:} The parameters $\theta_{0}$ of the policy $ \pi $ (i.e. the actor GNN); the parameters $\phi_{0}$ of the value function $ V $ (i.e. the critic GNN); iteration epochs $N, M, T$.
\STATE $ \textbf{for} $ $ i=1,2,\dots,N $ $\textbf{do}$
\STATE \hspace{0.5cm}$ \textbf{for} $ $ t=1,2,\dots,T $ $\textbf{do}$  
\STATE \hspace{1cm}$ \textbf{Step 1:} $ 
\STATE \hspace{1cm}Based on $ \pi _{\theta _{i} } $, collect $ \left \{ s_{t}, s'_{t} , a_{t}, r_{t}, \pi _{\theta_{i}  } \left ( a_{t} | s_{t}  \right )  \right \}  $ to
\STATE \hspace{1cm}buffer,
\STATE \hspace{1cm}\textbf{if} $t$ mod $\mathfrak{T} \ne 0$ \textbf{then}
\STATE \hspace{1.5cm}$ s_{t+1} \leftarrow s'_{t}$
\STATE \hspace{1cm}\textbf{else}
\STATE \hspace{1.5cm}$ s_{t+1} \leftarrow s_{1}$
\STATE \hspace{1cm}$ \textbf{Step 2:} $ 
\STATE \hspace{1cm}Estimate $A_{t}\left (  \theta_{i} \right )$ from $V_{\phi _{i} } $ according to \eqref{advantage}.
\STATE \hspace{0.5cm}$ \textbf{end for} $ 
\STATE \hspace{0.5cm}$ \textbf{for} $ $ j=1,2,\dots,M $ $\textbf{do}$ 
\STATE \hspace{1cm}$ \textbf{Step 3:} $ 
\STATE \hspace{1cm} $\theta_{i,j} \leftarrow \argmax L_{\rm actor}(\theta) $ \eqref{J}.
\STATE \hspace{1cm}$ \textbf{Step 4:} $ 
\STATE \hspace{1cm} $\phi_{i,j} \leftarrow  \argmin L_{\rm critic}(\phi) $ \eqref{L}.
\STATE \hspace{0.5cm}$ \textbf{end for} $ 
\STATE \hspace{0.5cm}$ \theta_{i+1} \leftarrow \theta_{i,j} , \phi_{i+1} \leftarrow \phi_{i,j}  $
\STATE \hspace{0.5cm}Empty buffer
\STATE $ \textbf{end for} $ 
\end{algorithmic}
\label{alg1}
\end{algorithm}

\begin{figure*}[!t]
  \centering
  \includegraphics[width=.7\textwidth]{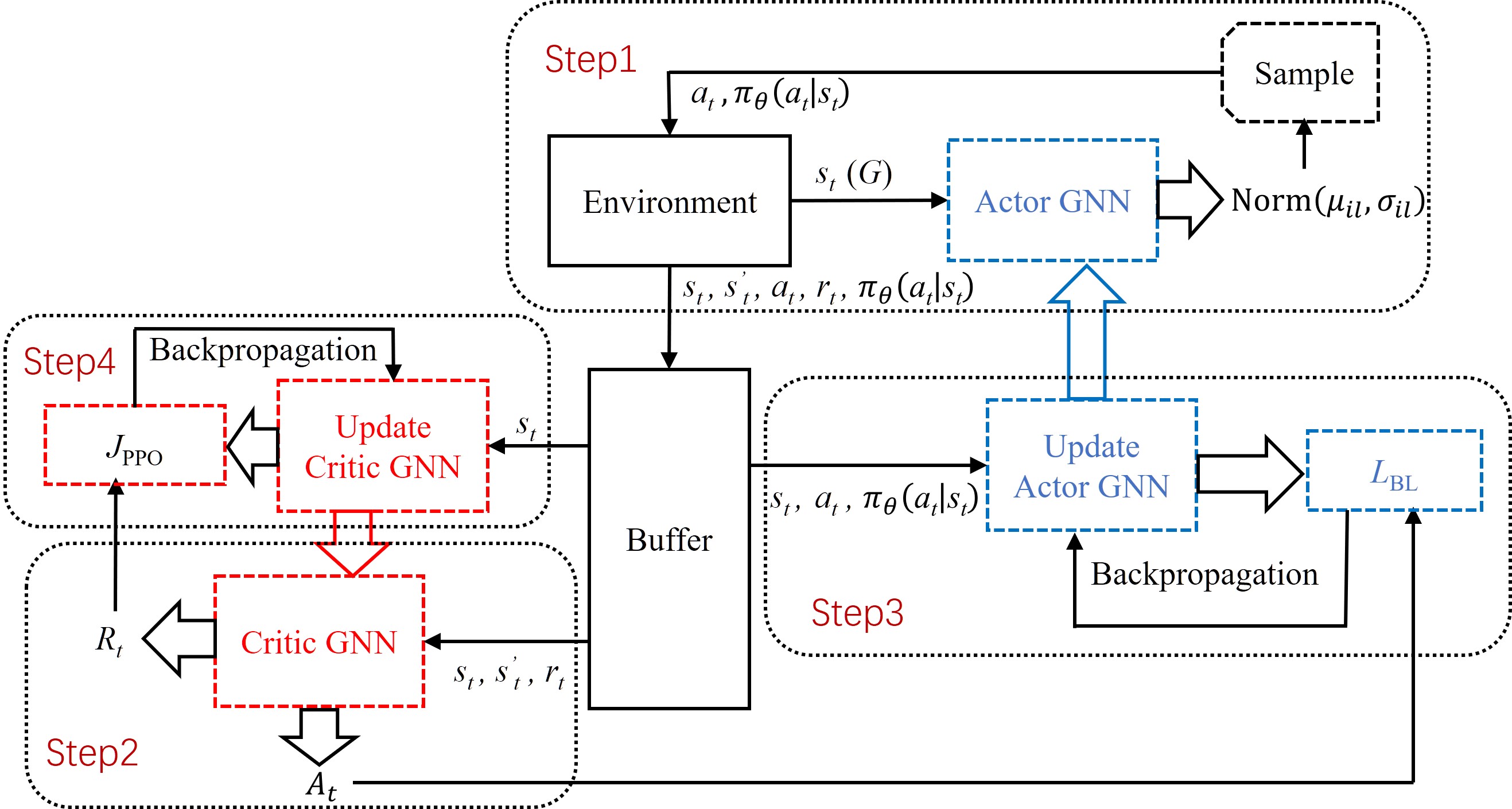}
  \caption{GNN-based PPO framework. 
Step 1: exploration based on the current state $s_{t} $ and the current actor $\pi_\theta$ to obtain $\pi _{\theta _{i} } ( a_{t} | s_{t}  ) $, from which $a_t$ is sampled. Following the update of $a_t$, we obtain $r_{t} $ and $s_{t+1}=s'_{t}$.
These data are stored in the buffer until it is filled.
Step 2: estimate $A_{t}$ according to \eqref{advantage} and $R_{t} $ according to \eqref{targetV} with all data in the buffer based on the current Critic.
Steps 3 and 4: based on the data in the buffer and the estimated $A_{t}$ and $R_{t} $, the actor and critic are updated multiple times according to \eqref{J} and \eqref{L}, respectively.}
  \label{PPO2_framework}
\end{figure*}
For clarity, we summarize the above procedures in Algorithm~\ref{alg:alg1} and Fig.~\ref{PPO2_framework}.
Based on this framework, we explore the state space in a comprehensive way so that a sufficiently rich collection of update trajectories of the relay nodes can be obtained.
Each trajectory $\{(\vxi_1[\mathfrak{t}], \vxi_2[\mathfrak{t}], \dots, \vxi_n[\mathfrak{t}]), \mathfrak{t} = 0,\dots,\mathfrak{T}\}$ consists of the positions of legitimate nodes that are updated $\mathfrak{T}$ times.
To construct the training dataset, we collect data points $\{(\vxi_1[{\mathfrak{t}}{k}], \vxi_2[{\mathfrak{t}}{k}], \dots, \vxi_n[{\mathfrak{t}}{k}]), \mathfrak{t} = 0,\dots,\mathfrak{T}/k\}$ along each trajectory at a fixed time interval $k$.

\section{Experiments}\label{sec:experiment}
\subsection{Experimental setup}\label{subsec:experiment}
We test our methodology in the scenario where, in addition to a wireless network with the source, destination, and relay nodes (which will be referred to as legitimate nodes altogether),
there also exists a jammer. 
Note that similar anti-jamming communication scenarios are commonly considered in the wireless communication literatures \cite{wang2018trajectory, valiulahi2020multi, he2014dynamic, rahmati2021dynamic}.
In the considered scenario, the jammer is assumed to be positioned at a known location $\vxi_{\rm J} = \left( x_{\rm J}^{(1)}, x_{\rm J}^{(2)} \right) \in \R^2 $ and it interferes the data transmission within the wireless network.
The signal-to-interference ratio (SIR) for the transmission from node $i$ to node $j$, ${\rm SIR}_{i,j}$, $i, j \in \{1, \dots, n\}$ can be modelled as
\begin{equation}
\label{SIR}
    {\rm SIR}_{i,j} \overset{\bigtriangleup }{= } \frac{d_{i,j }^{- \alpha } }{\eta  d_{j,{\rm J} }^{- \alpha }+  \sum_{k\in \Xi _{i,j}  }\nu \left ( {d_{j,k} } / {r_{\rm int} }  \right ) },
\end{equation}
where $d_{i,j} = \norm{\vxi_i - \vxi_j}$ is the distance between node $i$ and node $j$, $d_{j, J} = \norm{\vxi_j - \vxi_{\rm J}}$ is the distance between node $j$ and the jammer, 
$\eta$ is the power ratio of the jammer to the legitimate nodes. Note that, a path-loss model \cite{goldenberg2004towards} for the wireless channel is assumed in \eqref{SIR} and $\alpha$ is the path loss factor, and $r_{\rm int}$ is a design parameter.
Also, $\Xi_{i,j} = \{1, \dots, n\} \backslash \{i,j\}$, where ``$\backslash$'' denotes set difference, is the set of indices of legitimate nodes excluding $i$ and $j$.
To prevent violation of physical constraints, $\nu$ is defined by
\begin{equation}
  \label{smoothed_step_function}
  \nu \left ( z \right )   \overset{\triangle}{=} \varrho \frac{ \exp(- \kappa  z-\log  z_{0} ) }{1+  \exp(- \kappa  z-\log z_{0} )},
\end{equation}
where $\varrho$ and $ \kappa$ are all design parameters, and $ z_{0} $ can be an arbitrarily  small positive number.
Based on the above model, the (two-way average) communication rate $f_{\rm cap}$ in \eqref{aij} between two nodes $i$ and $j$ is defined as
\begin{equation}
  \label{f_function}
  f_{\rm cap} (\vxi_i, \vxi_j) \overset{\triangle}{=} \frac{B} { \left [   \ln{\left ( 1+  {\rm SIR}_{i,j} \right )  }  \right ] ^{- 1 }+ \left [   \ln{\left ( 1+  {\rm SIR}_{j,i} \right )  }  \right ] ^{- 1 }  },
\end{equation}
where it is assumed that orthogonal spectrum is allocated to each link of the network and $ B $ denotes the corresponding bandwidth per link.
Therefore, the corresponding derivatives $\displaystyle \frac{\partial A_{p,q}}{\partial x_i^{(m)}} $ (cf. \eqref{A_gradient}) for this considered scenario can be found by the chain rule based on \eqref{SIR}--\eqref{f_function}.

In our experiments, we consider a wireless network with $n=6$ legitimate nodes in a $600\times 600 ~m^{2} $ area.  
We represent this area by a Cartesian coordinate system, where the range of $x$-coordinate is $\left ( -6, 6 \right)$ and the range of $y$-coordinate is $\left ( -6, 6 \right)$, and the unit length in both coordinates is set to 50 meters.
The coordinates of the source and the destination nodes are fixed to $\vxi_1 = (-4.5, 0) $ and $\vxi_6 = (4.5, 0)$, respectively. 
In each deployment, the relay nodes are initialized with the following coordinates: 
$\vxi_2 = (-2.7, 0)$, $\vxi_3 = (-0.9, 0)$, $\vxi_4 = (0.9, 0)$, $\vxi_5 = (2.7, 0)$.
The coordinates of the jammer node are randomly chosen with the following constraints: $ x_{\rm J}^{(1)} \in \left ( -6, 6 \right), x_{\rm J}^{(2)} \in \left ( -6, 6 \right), d_{1,{\rm J}}>3, $ and $ d_{6, \rm{J}}>3$, where the last two constraints represent ``guard zones'' that prevent the jammer from getting too close to the source or destination.

To generate the training set, we initialize a total of $2,000$ deployments. Every deployment is updated for $\mathfrak{T} = 400$ steps with stepsize $\zeta=0.02$ following the PPO strategy described in Section \ref{subsec:dataset_creation}. In the GNN-based PPO, we choose $\gamma=0.9$, $\tau=0.2$, $\varphi=5$, $T=400\times 5=2,000$, $M=10 $.
We collect a data point once every $5$ steps from each deployment. For convenience, we name this dataset the RLGP (short for Reinforcement Learning GNN-PPO) dataset. For efficiency, the actor GNN and the critic GNN stop training when the reward converges or the number of epochs reaches the maximum. 
On the other hand, the test set contains $500$ initial deployments which have different jammer locations.

We used the PyTorch Geometric library \cite{fey2019fast} to implement all the GNN-related tasks.

\subsection{Baseline Methods}\label{subsec:baseline}
We compare MFL with the following baseline methods. 
\begin{itemize}
    \item \emph{Gradient Learning (GL)}: This baseline is the na\"{i}ve approach introduced in Section \ref{subsec:naive_approach}. The GNNs in both MFL and GL are trained using the RLGP dataset described in Section \ref{subsec:experiment}. The label for MFL is the max-flow value, while the labels for GL are the directions of the relay nodes traveled in the training procedure.
    \item \emph{Reinforcement Learning (RL)}: This baseline is a direct application of PPO to the test data. Specifically, we train an agent to explore the test data and provide the update directions for the relay nodes. The training procedure still follows the framework illustrated in Fig.~\ref{PPO2_framework}.
    We obtain the optimal trajectory of each test deployment and use it to compare with MFL. To ensure good results, the number of timeslots, $T$, is chosen to be $40,000$, and the number of epochs, $N$, is $10,000$.
    \item \emph{Weighted Cheeger Constant (WCC)}: This baseline is introduced in Section \ref{subsec:existing_method}. As in the CC method, we use the gradients of $\lambda_{2}(\mL_{\mW})$ to adjust the locations of relay nodes.
    In our experiments, the weight matrix is $\mW= {\rm diag}\{ w_{1},\dots , w_{n}\}$, where $w_{i} = 3n$ if $i = 1, n$, and $w_i = 1$ if $i \neq 1, n$.
\end{itemize}

\subsection{GNN Architectures}\label{subsec:gnn_architecture}
In our experiments, we use several different GNNs. We describe their structures as follows.

\subsubsection{GNN in GL}\label{subsec:gnn_GL}
In GL, we use a GNN that consists of a cascading of three GraphConv layers, each followed by the GELU activation function.
The outputs of the last GraphConv layer are then pushed forward through two linear layers, with another GELU activation in between.
This architecture is illustrated in Fig.~\ref{GNN_gradient_model}.

\begin{figure}[!t]
  \centering
  \includegraphics[width=1.75in]{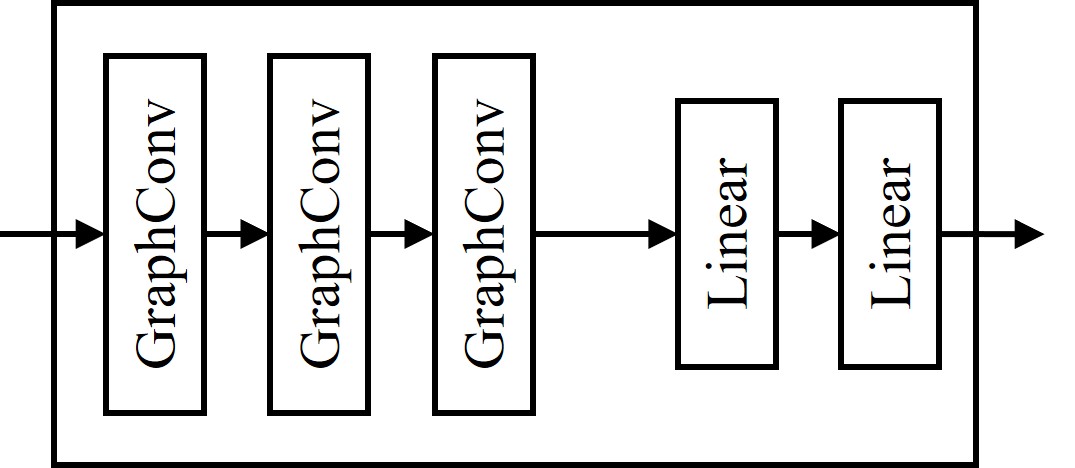}
  \caption{The GNN model for gradient learning is a cascading of three convolutional layers and two linear layers.}
  \label{GNN_gradient_model}
\end{figure}

\subsubsection{GNN in MFL}\label{subsec:gnn_ML}
Since MFL is a graph-level task, we need to construct a  permutation-invariant GNN through a global pooling layer. To this end, we take a global sum of the features of all nodes which returns a graph-level output after all the GraphConv layers. We use again the GELU activation functions between layers. We illustrate this GNN in Fig.~\ref{GNN_value_model}.
\begin{figure}[!t]
  \centering
  \includegraphics[width=2.5in]{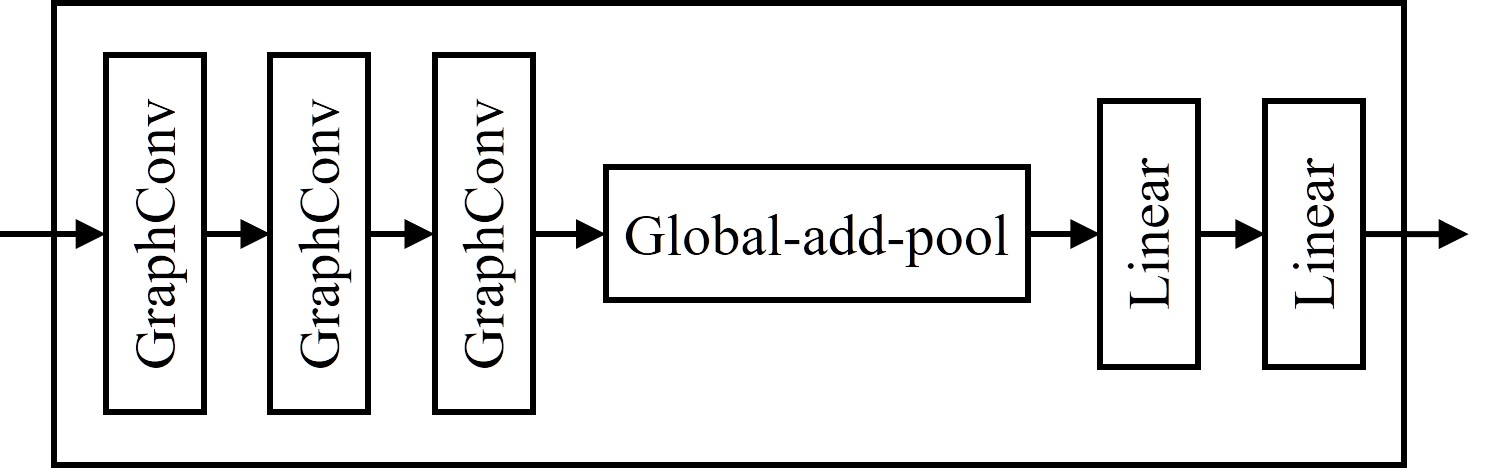}
  \caption{The GNN model for learning the max-flow. We show the architecture used in our experiments, including a cascading of three convolutional layers, one global pooling layer, and two linear layers.}
  \label{GNN_value_model}
\end{figure}

\subsubsection{GNNs in GNN-based PPO}\label{subsec:gnn_ppo}
The structures of the actor GNN and the critic GNN are shown in  Fig.~\ref{gnn_model_ppo}.
The actor GNN consists of a cascading of the following: two GraphConv layers, each with a GELU activation function; global sort pooling; and two different fully-connected linear layers, whose outputs are activated by $\tanh$ and ${\rm softplus} = \log(1+\exp(\cdot))$ respectively. 
The critic GNN has the same graph convolutional layers and activation function, but we use a global addition pooling layer followed by a fully-connected linear layer.

\begin{figure}[!t]
  \centering
  \subfloat[]{\includegraphics[width=2.5in]{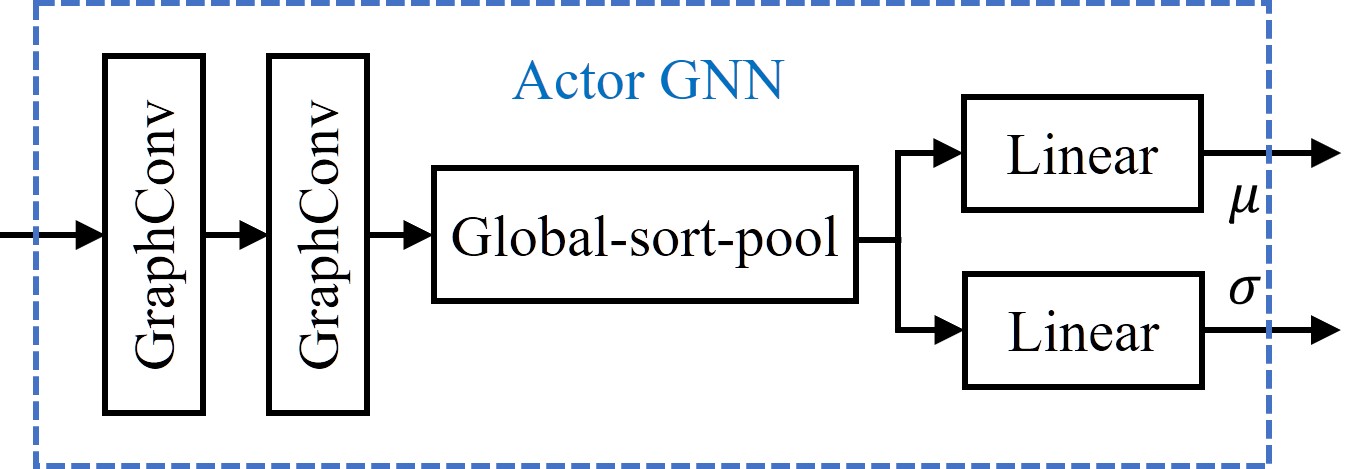}%
  \label{PPO2_Actor_framework}}
  \hfil
  \subfloat[]{\includegraphics[width=2in]{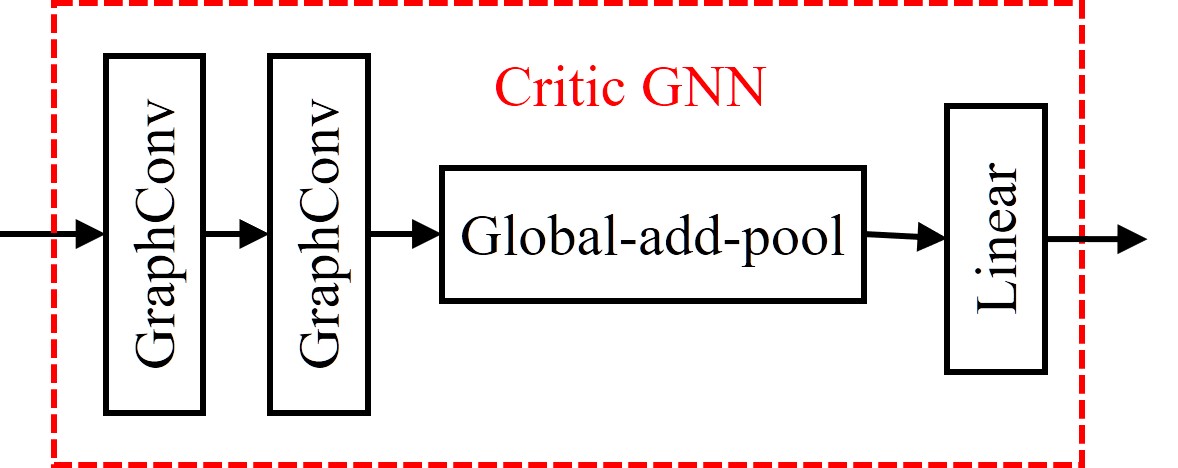}%
  \label{PPO2_Critic_framework}}
  \caption{The GNN model for the PPO framework. (a) The actor GNN model. (b) The critic GNN model.}
  \label{gnn_model_ppo}
\end{figure}

\begin{table}[!t]
\caption{Hyperparameters of GNNs}
\label{tab:hyperparameters_GNN}
\centering
\renewcommand\arraystretch{1.5}
\begin{tabular}{|c||c||c||c||c|}
\hline
              & LG                         & MFL                        & Critic                     & Actor    \\
\hline
GraphConv1    & $(3,32)$                     & $(3,32)$                                   & (3,32)                     & $(3,32)$   \\
\hline
GraphConv2    & $(32,32)$                   & $(32,32)$                                  & $(32,32)$                    & $(32,32)$  \\
\hline
GraphConv3    & (32,32)                    & $(32,32)$                                  & N/A                      & N/A         \\
\hline
Linear1        & $(32,32)$                    & $(32,32)$                                  & $(128,8) $                   & $(32,1) $  \\
\hline
Linear2        & $(32,2)$                     & $(32,1)$                     & $(128,8) $                   & N/A                        \\
\hline
Learning rate & 0.0002 & 0.0002 & 0.0001 & 0.0004                   \\
\hline
Bitch size    & 100    & 100    & 100  & 100                          \\
\hline
Number of epoch         & 5,000   & 8,000    & \multicolumn{2}{c!{\vrule}}{3,000}  \\                     
\hline
\end{tabular}
\end{table}

The hyperparameters of all the GNNs, including both those in the main experiments and those in the dataset creation, are presented in Table~\ref{tab:hyperparameters_GNN}, where the numbers in $(\cdot,\cdot)$ denote the number of input channels and the number of output channels, respectively.

\subsection{Comparison Results}\label{subsec:compar_methods}
We use the same test set described in Section \ref{subsec:experiment} to evaluate the performance of all baseline methods mentioned above.
In Fig.~\ref{update}\subref{map_update}, we illustrate an example showing the trace of updates during the testing process, where the relay nodes are updated for $\mathfrak{T} = 400$ steps.
The corresponding change of the max-flow value of this network is shown in Fig.~\ref{update}\subref{maxflow_update}.
The performance of the methods is measured by the final max-flow of the network after the relay node location update.

\begin{figure}[!t]
  \centering
  \subfloat[]{\includegraphics[width=1.75in]{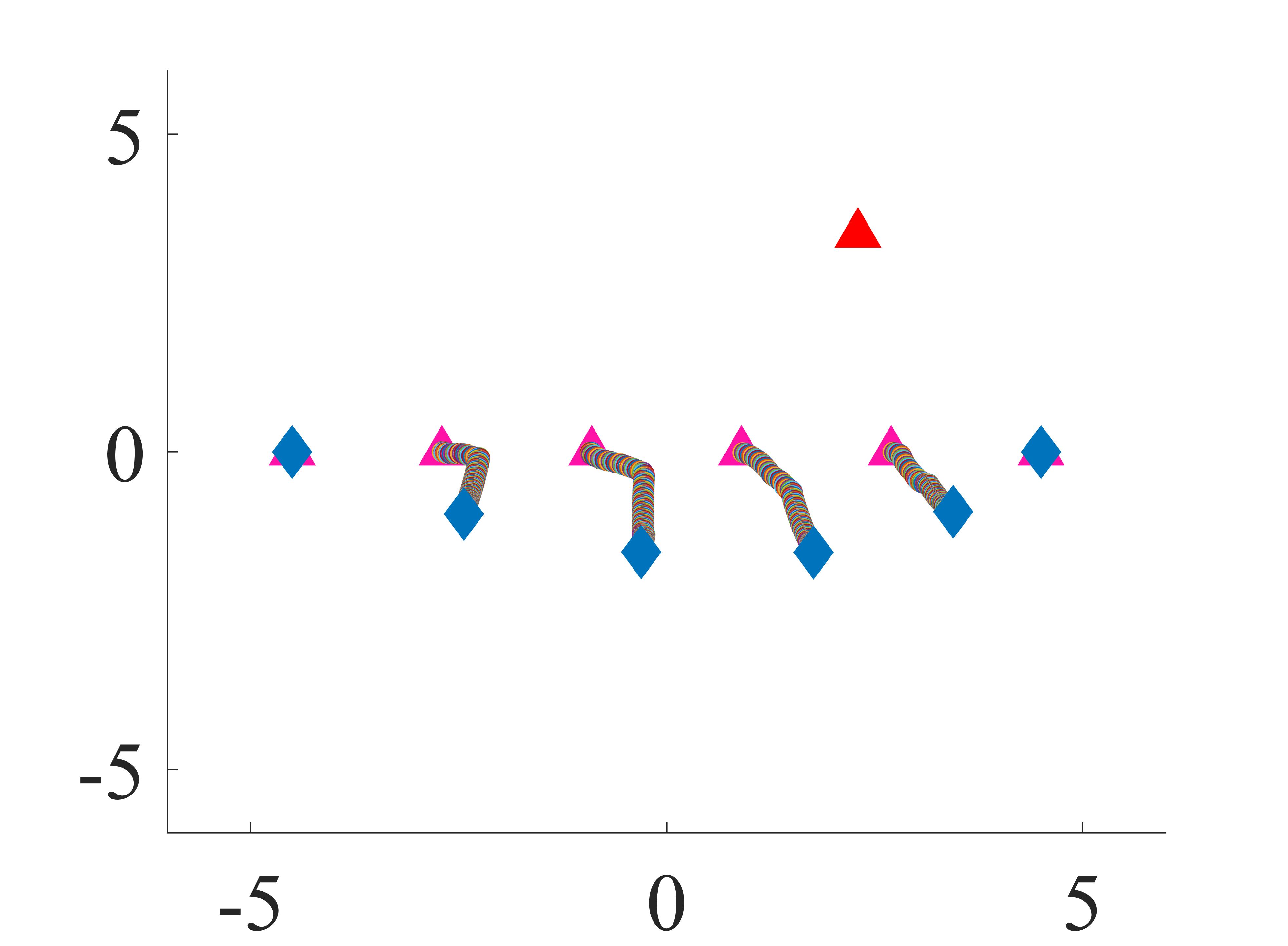}%
  \label{map_update}}
  \subfloat[]{\includegraphics[width=1.75in]{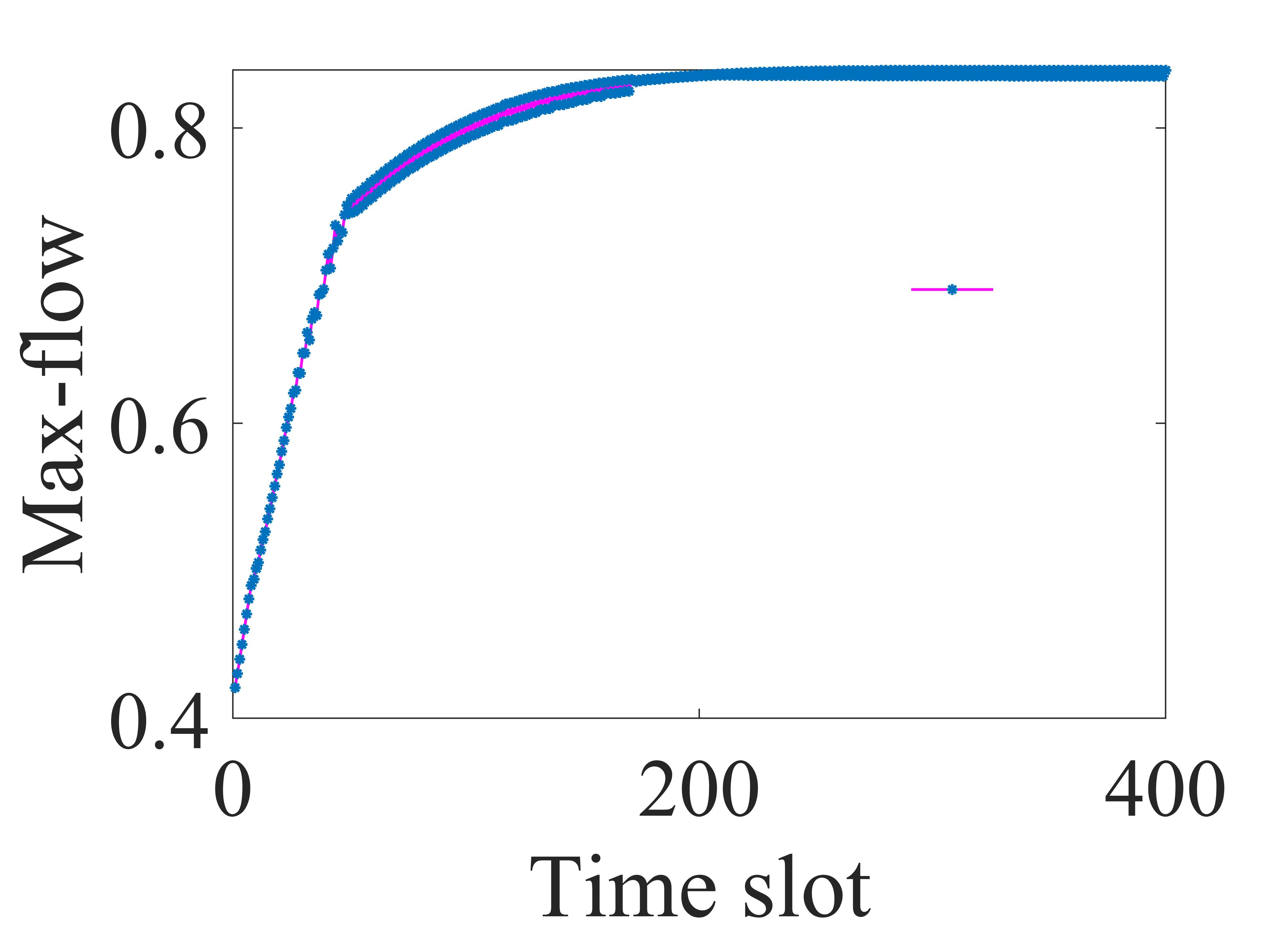}%
  \label{maxflow_update}}
  \caption{An example of the update of the nodes.  (a) The red triangle represents the jammer node, the pink triangles represent the legal nodes of the initial deployment, the blue rhombi represent the legal nodes of the updated final deployment, and the colored circles represent the trace of the relay nodes during the update for $400$ steps. (b) Change of max-flow with nodes update.}
  \label{update}
\end{figure}

\subsubsection{Comparison with GL}\label{subsec:compar_GL}
We first compare the performance between MFL and GL. 
In Fig.~\ref{result_compar_GL}\subref{ML494_GL6_his}, we summarize the relative difference of the final max-flow between the two methods in a histogram. Here, a positive number in the horizontal axis (``$>0$'') indicates that the MFL method is superior and a negative number (``$<0$'') indicates the contrary.
There are $494$ deployments out of $500$ in which MFL outperforms GL, and in $420$ of them, the final max-flow value from MFL is at least $20\%$ higher than GL.
There are also a few deployments where MFL is significantly superior. We further illustrate the locations of the jammer and imply the better performing method in Fig.~\ref{result_compar_GL}\subref{ML_GL_dis}. Clearly, only when the jammer node is located near the origin may GL outperform MFL.
\begin{figure}[!t]
  \centering
  \subfloat[]{\includegraphics[width=1.75in]{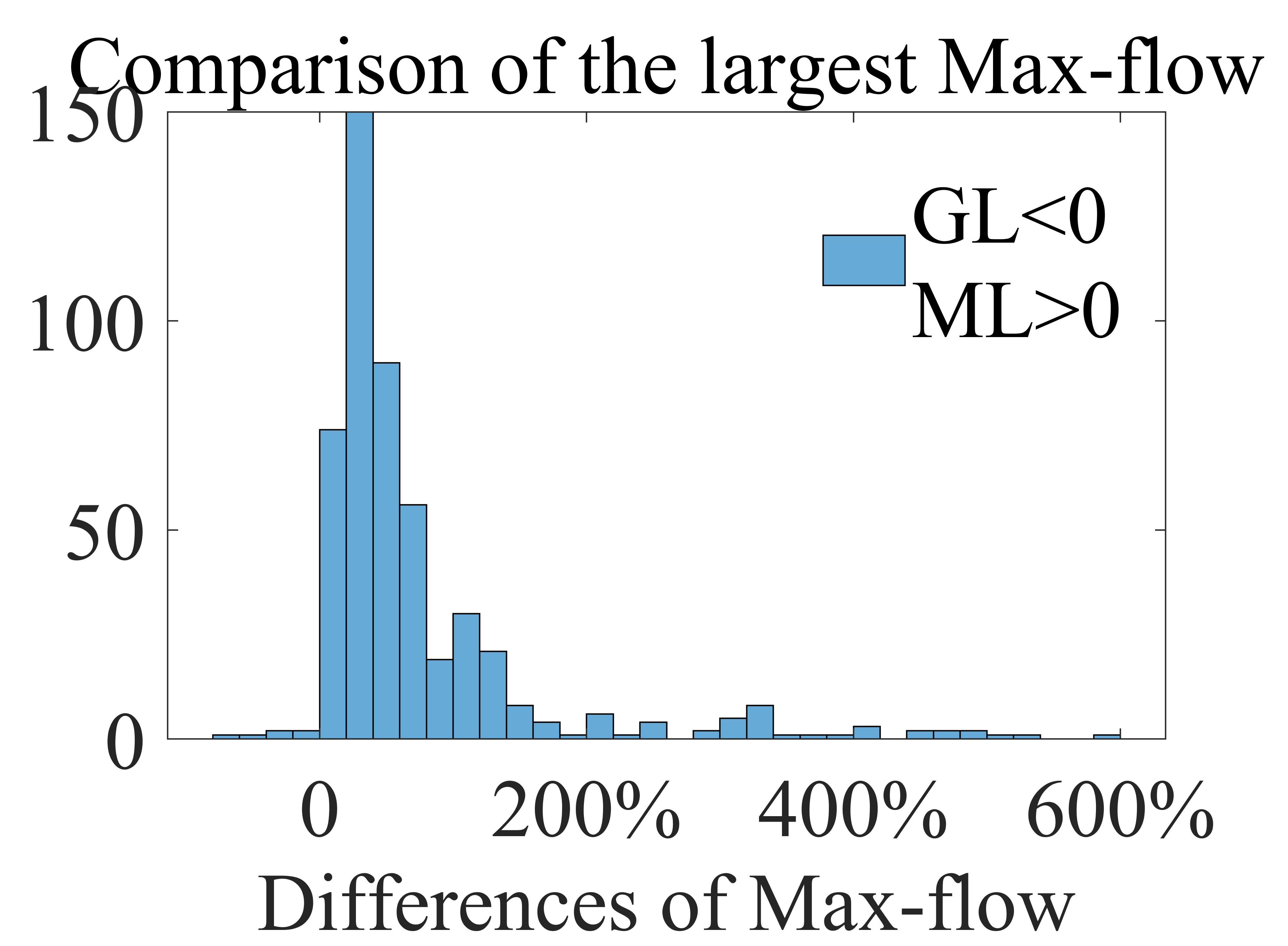}%
  \label{ML494_GL6_his}}
  \subfloat[]{\includegraphics[width=1.75in]{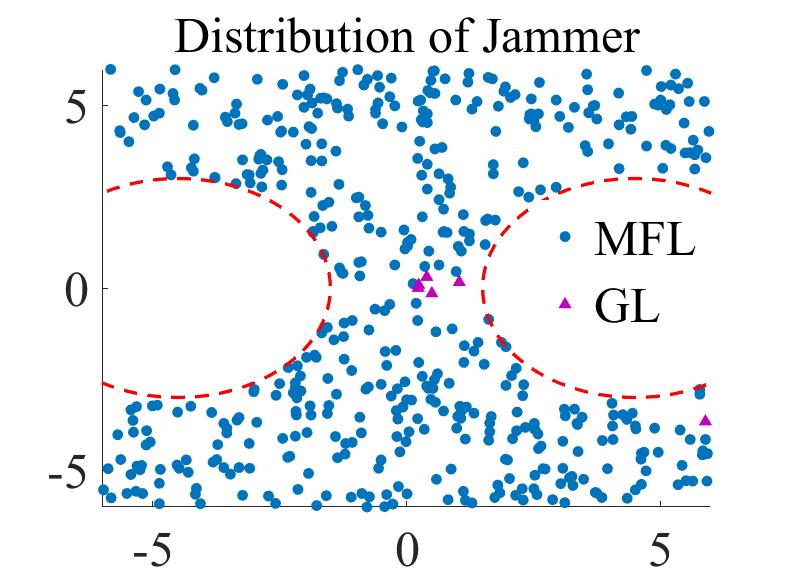}%
  \label{ML_GL_dis}}
  \caption{Comparison with the GL method. (a) Histogram of the max-flow difference. (b) Distribution of jammer nodes: the blue circles indicate the locations of jammer nodes for which the MFL method excels, the pink triangles indicate the locations of jammer nodes for which the GL method excels, and the red circular dashed line indicates the boundary of the guard zones.}
  \label{result_compar_GL}
\end{figure}

\subsubsection{Comparison with RL}\label{subsec:compar_RL}
We compare MFL and RL and show the results in Fig.~\ref{result_compar_RL}.
As shown in Fig.~\ref{result_compar_RL}\subref{ML449_RL51_his}, in $449$ (more than $89\%$) out of $500$ deployments we observe a superior performance from MFL. 
Fig.~\ref{result_compar_RL}\subref{ML_RL_dis} implies that the jammer nodes for which MFL performs worse are mainly located around the edge of the guard zones. 
\begin{figure}[!t]
  \centering
  \subfloat[]{\includegraphics[width=1.75in]{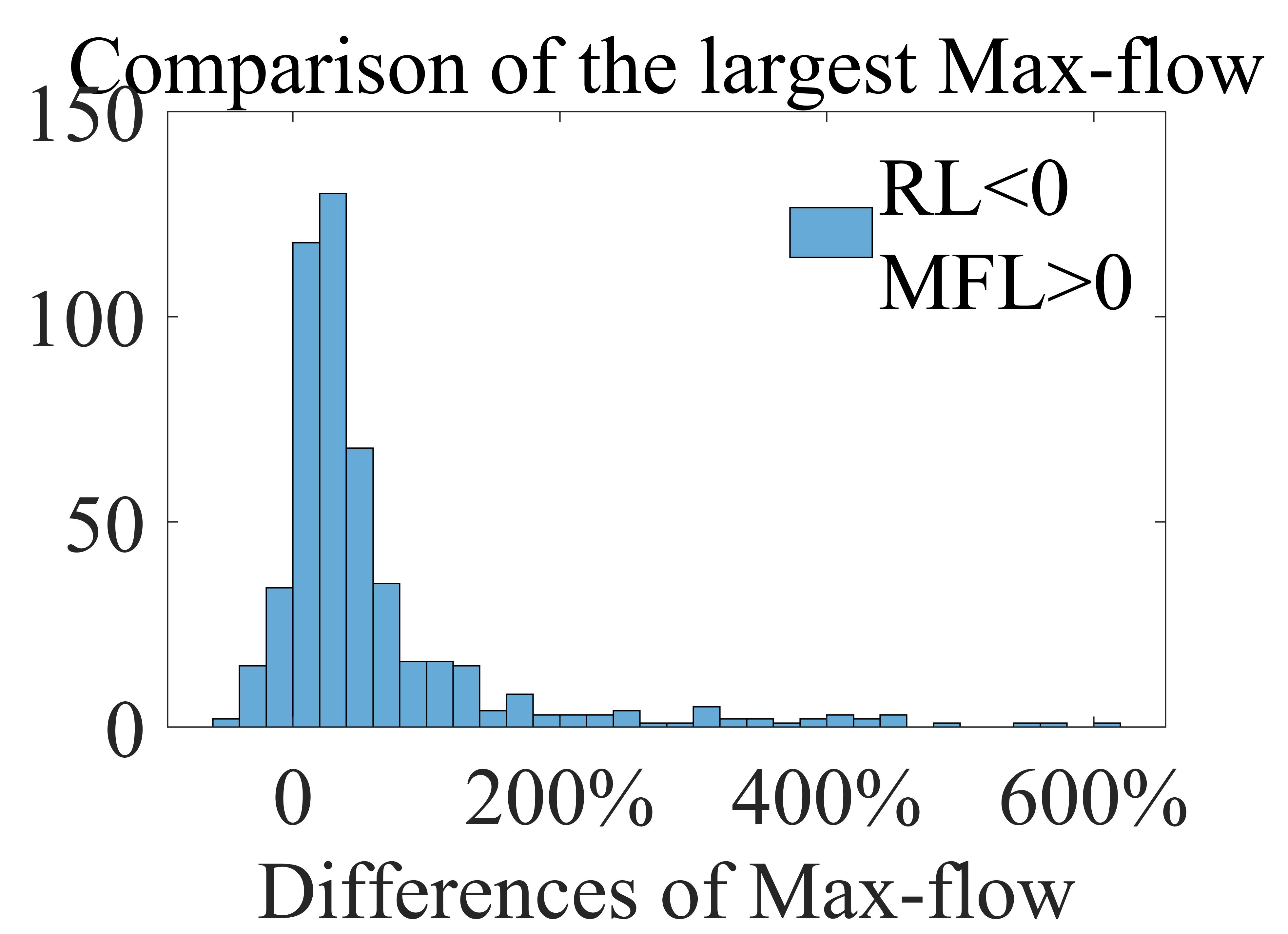}%
  \label{ML449_RL51_his}}
  \subfloat[]{\includegraphics[width=1.75in]{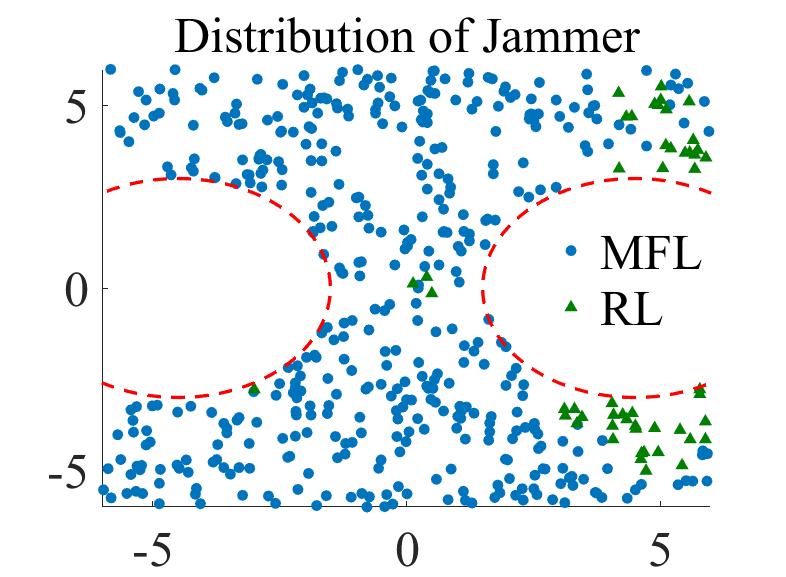}%
  \label{ML_RL_dis}}
  \caption{Comparison with the RL method. (a) Histogram of the max-flow difference. (b) Distribution of jammer nodes: the blue circles indicate the locations of jammer nodes for which the MFL method excels, the green triangles indicate the locations of jammer nodes for which the RF method excels, and the red circular dashed line indicates the boundary of the guard zones.}
  \label{result_compar_RL}
\end{figure}

\subsubsection{Comparison with WCC}\label{subsec:compar_WCC}
Lastly, we compare the performance of MFL and WCC and show the result in Fig.~\ref{result_compar_WCC}.
As shown in Fig.~\ref{result_compar_WCC}\subref{ML377_WCC123_his}, MFL outperforms WWC in $377$ deployments (more than $75\%$) out of $500$. 
From Fig.~\ref{result_compar_WCC}\subref{ML_WCC_dis}, we can observe again that MFL is less effective than WCC mainly at the edge of the guard zones.
\begin{figure}[!t]
  \centering
  \subfloat[]{\includegraphics[width=1.75in]{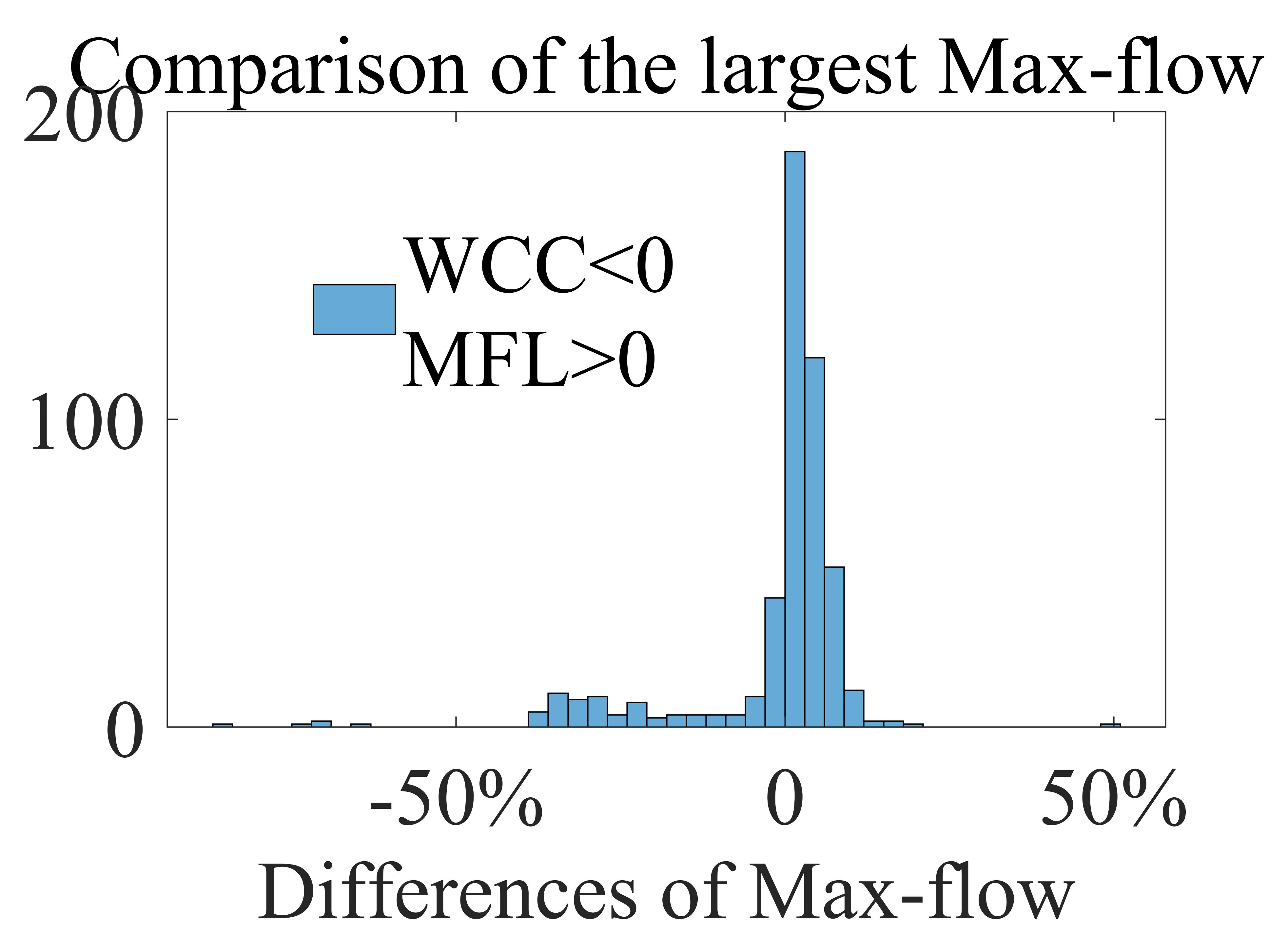}%
  \label{ML377_WCC123_his}}
  \subfloat[]{\includegraphics[width=1.75in]{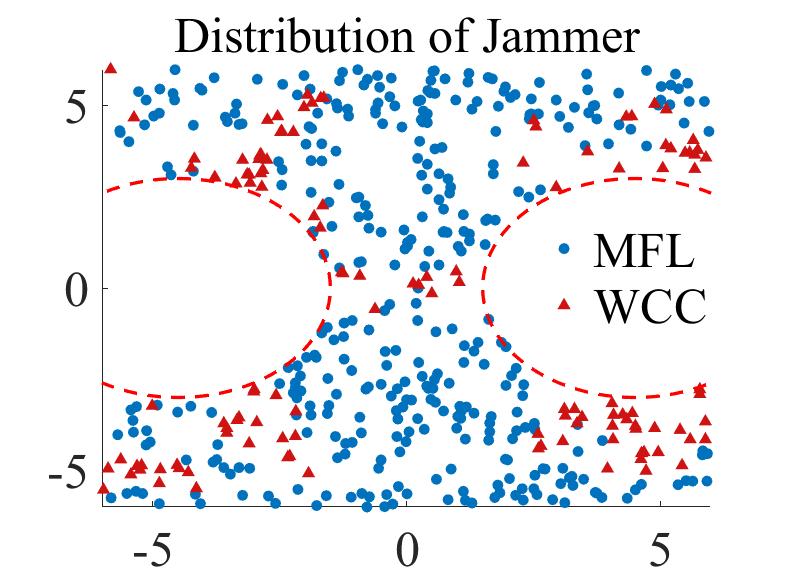}%
  \label{ML_WCC_dis}}
  \caption{Comparison with the WCC method. (a) Histogram of the max-flow difference. (b) Distribution of jammer nodes: the blue circles indicate the locations of jammer nodes for which the MFL method excels, the red triangles indicate the locations of jammer nodes for which the WCC method excels, and the red circular dashed line indicates the boundary of the guard zones.}
  \label{result_compar_WCC}
\end{figure}

\subsubsection{Discussion}\label{subsec:discussion}
It is clear from the results that only for some special locations of the jammer node near the origin or the edge of the guard zones, MFL is not as effective as baseline methods. We believe the reason is that the jammer node in these deployments is too close to the relay nodes, which causes the SIR between the surrounding relay nodes to be close to zero. In such cases, a small change of locations of the relay nodes may lead to a dramatic change in the max-flow. Our GNN is smooth and may not be capable of representing such drastic changes.

\subsection{Results of Hybrid Method}\label{subsec:results_hybrid_method}
We present the experimental results of the hybrid method in Fig.~\ref{result_compar_hybird}. 
\begin{figure}[!t]
  \centering
  \subfloat[]{\includegraphics[width=1.75in]{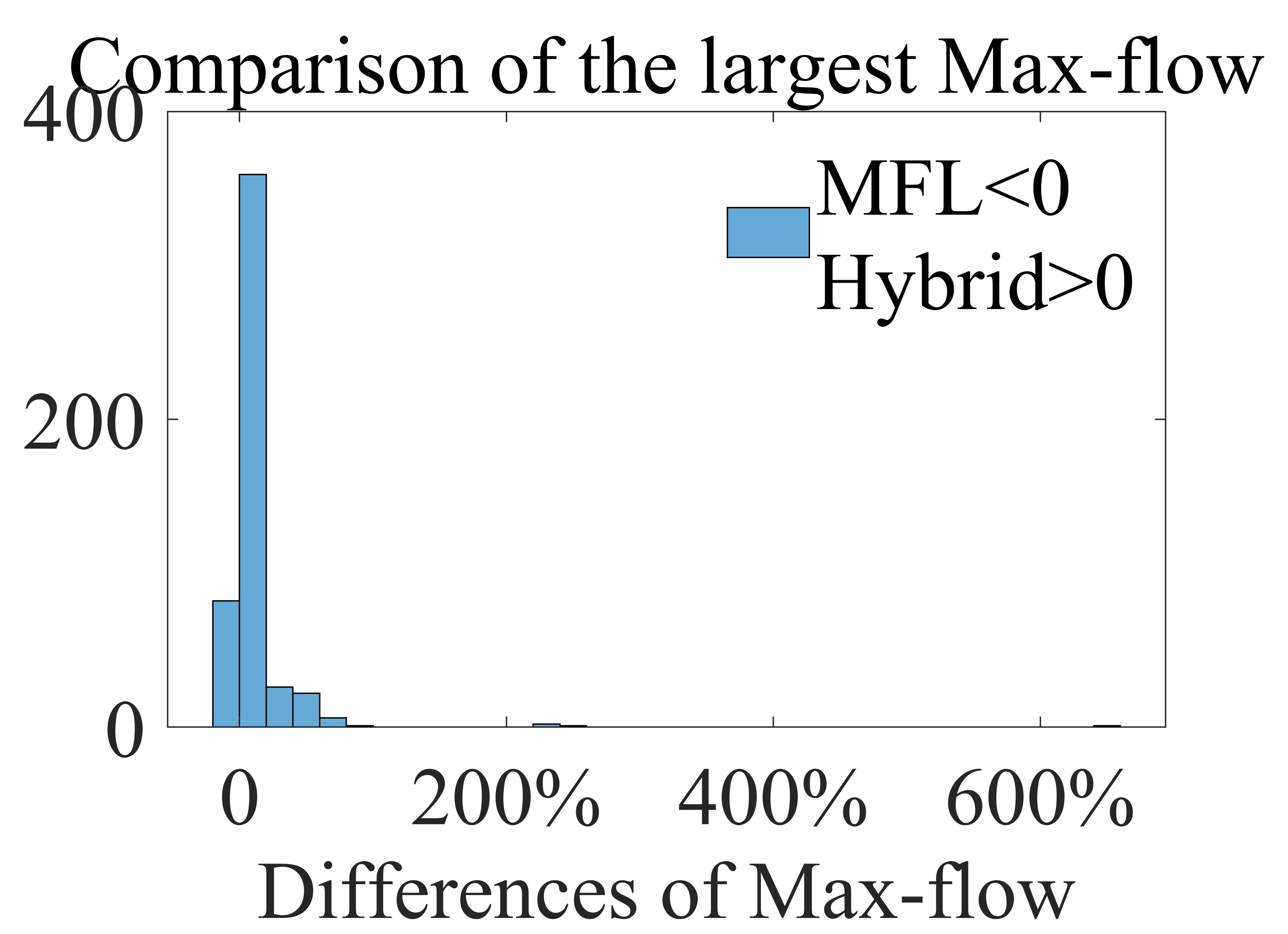}%
  \label{Hybird418_ML82_his}}
  \subfloat[]{\includegraphics[width=1.75in]{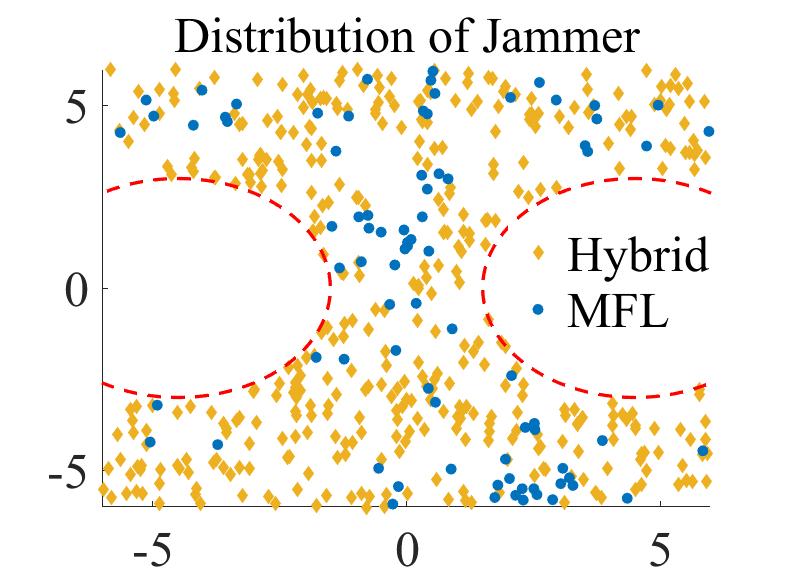}%
  \label{Hybird_ML_dis}}
  \hfil
  \subfloat[]{\includegraphics[width=1.75in]{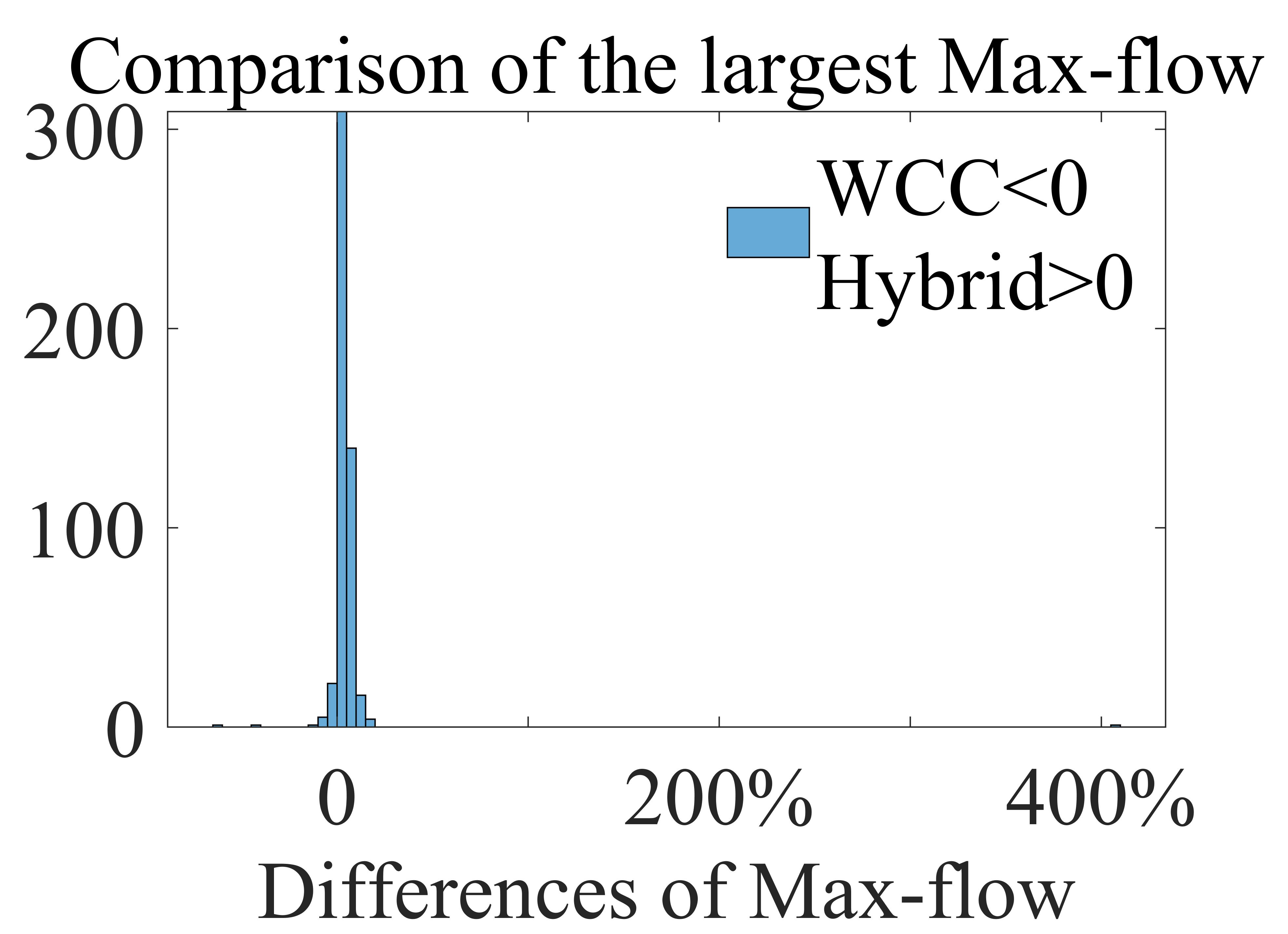}%
  \label{Hybird470_WCC30_his}}
  \subfloat[]{\includegraphics[width=1.75in]{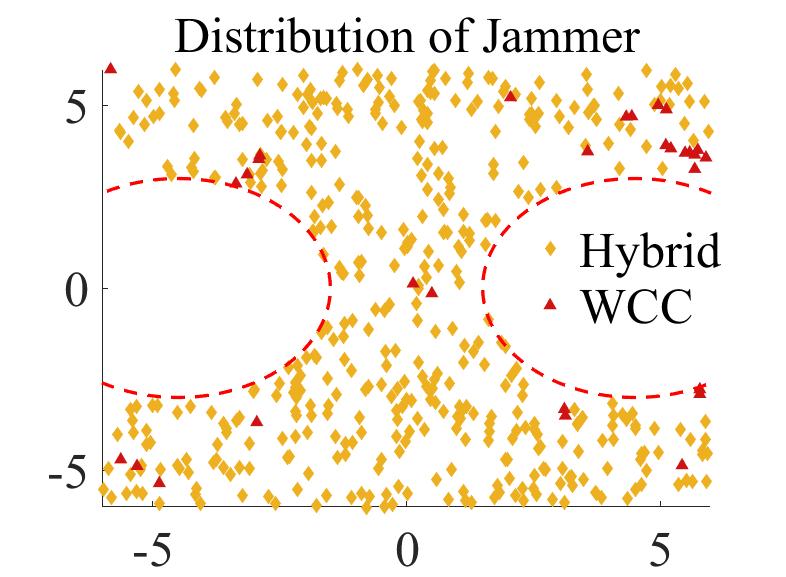}%
  \label{Hybird_WCC_dis}}
  \caption{Comparison with the MFL method and the WCC method. (a) Histogram of the max-flow difference between the hybrid method and the MFL method. (b) Distribution of jammer nodes, the blue circles indicate the locations of jammer nodes for which the MFL method excels, the yellow diamonds indicate the locations of jammer nodes for which the hybrid method excels, and the red circular dashed line indicates the boundary of the guard zones. (c) Histogram of the max-flow difference between the hybrid method and the WCC method. (d) Distribution of jammer nodes, the red triangles indicate the locations of jammer nodes for which the MFL method excels, the yellow diamonds indicate the locations of jammer nodes for which the hybrid method excels, and the red circular dashed line indicates the boundary of the guard zones.}
  \label{result_compar_hybird}
\end{figure}
The performance of the hybrid method excels both individual approaches.
Compared with the MFL method, the hybrid method is superior in $ 418 $ test deployments.
Fig.~\ref{result_compar_hybird}\subref{Hybird418_ML82_his} reveals that although most of the data only see an improvement of $20\%$ or less, there were still some deployments with an improvement of $50\%$ or more, with the largest reaching $641\%$.
In Fig.~\ref{result_compar_hybird}\subref{Hybird_ML_dis}, we observe that some deployments for which the MFL method does not perform well have an improved result when we use the hybrid method.
On the other hand, as observed in  Fig.~\ref{result_compar_hybird}\subref{Hybird470_WCC30_his}, $94\%$ of the test data enjoy improved results when the hybrid method replaces the WCC method.
Although the hybrid method performs well, Fig.~\ref{result_compar_hybird}\subref{Hybird_WCC_dis} shows clearly that there are still specific regions of the jammer node where the hybrid method falls behind the WCC method. We believe the hybrid method does not completely resolve the problem at the edge of guard zones, from which the MFL method suffers.

To summarize the results more clearly, we use the WCC method as a benchmark to quantitatively compare the average performance in the test set. We show the average difference of the max-flow (avg. diff. mf.) and the average relative difference (avg. rel. diff. mf.) in Table~\ref{table:diff_maxflow}.
In addition, to eliminate the influence of outliers, we calculate the truncated-averaged difference (c-avg. diff. mf.) and the truncated-averaged relative difference (c-avg. rel. diff. mf.), excluding the highest and lowest $10\%$ values. In all these metrics, the hybrid method has the best overall performance.
\newcommand{\tabincell}[2]{\begin{tabular}{@{}#1@{}}#2\end{tabular}}

\begin{table}[!t]
  \caption{Comparison of the average final max-flow value using WCC as baseline}
  \label{table:diff_maxflow}
  \centering
  \renewcommand\arraystretch{1.5}
  \begin{tabular}{|c||c|c|c|c|}
    \hline
        ~ & \tabincell{c}{avg.\\ diff. mf. } &  \tabincell{c}{avg. rel.\\ diff. mf. }  & \tabincell{c}{c-avg.\\ diff. mf. }  &  \tabincell{c}{c-avg. rel.\\ diff. mf. }                                          \\ 
    \hline
    WCC  & 0   & 0 & 0  & 0                                          \\ 
    \hline
    GL   & -0.3099      & -35.47\%   & -0.3016  & -33.79\%                         \\ 
    \hline
    RL &  -0.2196   & -27.82\%  & -0.2188   & -25.75\%                                                     \\ 
    \hline
 MFL   & -0.0128  & -1.43\%  & 0.0116 & 1.32\% \\
\hline
 Hybrid  & \bf{0.0314} & \bf{4.21\%} & \bf{0.0301} & \bf{3.49\%} \\
\hline 
\end{tabular}
\end{table}

\subsection{Ablation Studies}\label{subsec:ablation_studies}
We perform ablation studies to show two important ideas implemented in our experiments. The first is the use of reinforcement learning for generating the dataset; the second is the use of expressive GNN layers throughout our experiments.

\subsubsection{Comparison of different datasets}\label{subsec:compar_diff_data}
In addition to the RLGP dataset in our main experiments, we create two additional datasets, namely the random walk (which we abbreviate as ``RW'') dataset and the weighted Cheeger constant (which we abbreviate as ``WCC'') dataset.
The setups of generating two additional datasets are the same as RLGP, except for the strategy for moving the relay nodes.
Specifically, the RW dataset takes a direction uniformly on the unit circle, while the WCC dataset follows the WCC method when choosing the direction. Both datasets follow the same stepsize $\zeta$ when moving the relay nodes.
We use the RW dataset, WCC dataset, and RLCP dataset to train GNN respectively and implement the MFL method with these GNNs. The comparison of their test results is shown in Fig.~\ref{result_compar_diff_data}.
In Fig.~\ref{result_compar_diff_data}\subref{RLGP472_RW28_his}, the histogram reveals that the results using the RLGP dataset are remarkably superior to the RW dataset. $472$ of the results of the RLGP dataset are superior to the results for the RW dataset, and $211$ of them have a final max-flow larger than $40\%$ relative to the RW, with the largest even reaching $713\%$.
From Fig.~\ref{result_compar_diff_data}\subref{RLGP390_WCC110_his}, it is clear that the results using the RLGP dataset excel the RW dataset. Indeed, we observe that in $390$ deployments, which is close to $80\%$ of the total, the final max-flow values are higher when the GNN is trained using the RLGP dataset.
We conclude that using the RLGP dataset makes a significant contribution to the excellent performance of our proposed methodology.
\begin{figure}[!t]
  \centering
  \subfloat[]{\includegraphics[width=1.75in]{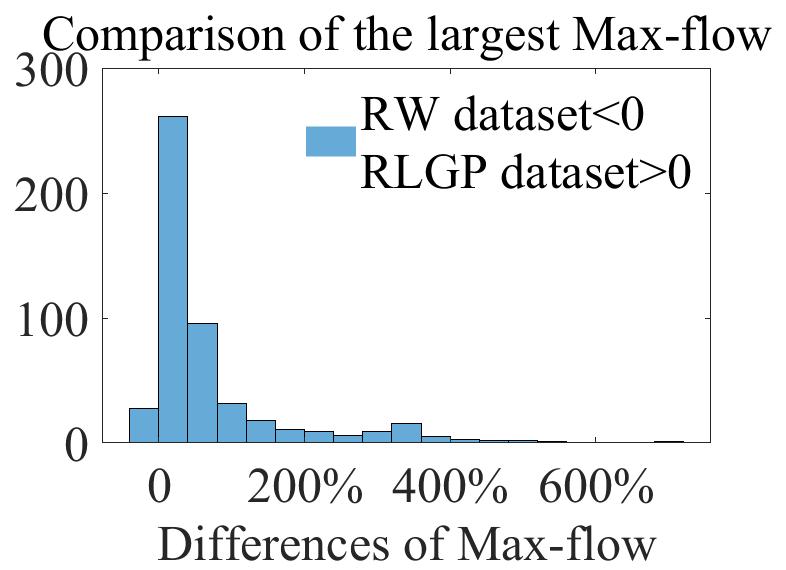}%
  \label{RLGP472_RW28_his}}
  \subfloat[]{\includegraphics[width=1.75in]{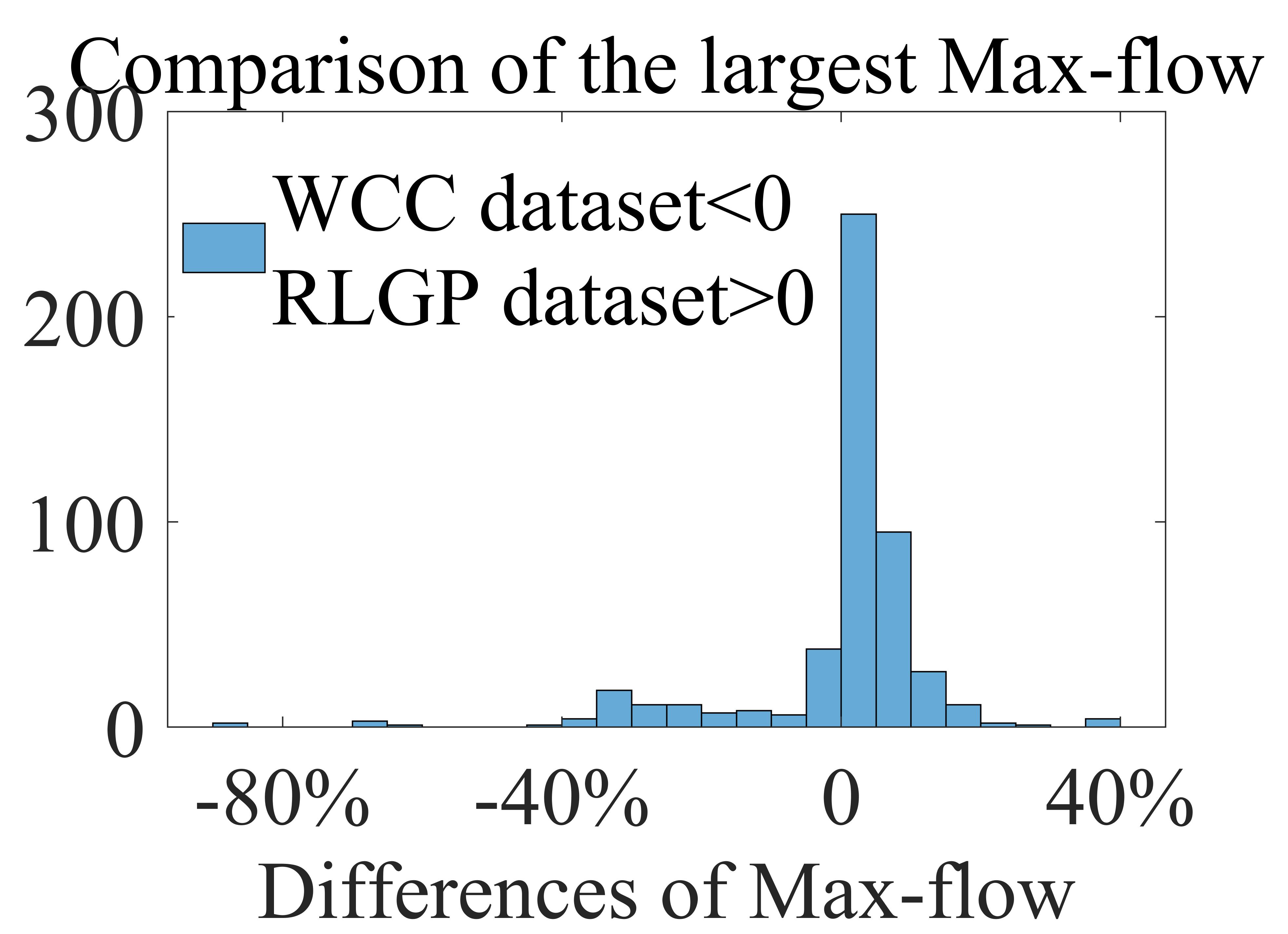}%
  \label{RLGP390_WCC110_his}}
  \caption{Comparison of different datasets for training GNN. (a) RLCP dataset vs. WCC dataset, (b) RLCP dataset vs. RW dataset.}
  \label{result_compar_diff_data}
\end{figure}

\subsubsection{Comparison of different graph convolution layers}\label{subsec:compar_diff_gcl}
To show the effect of using expressive GNN layers, we compare using GraphConv with GINEConv, which is the convolutional layer used in the graph isomorphism network \cite{xu2018how}, with edge features. While GraphConv is a higher-order method, GINEConv is first-order, and thus less expressive and has more deficient approximation power.

We apply both GraphConv and GINEConv to the GNN in the MFL method respectively. The comparison results are shown in Fig.~\ref{Graph478_GIN22_his}.
\begin{figure}[!t]
  \centering
  \includegraphics[width=1.75in]{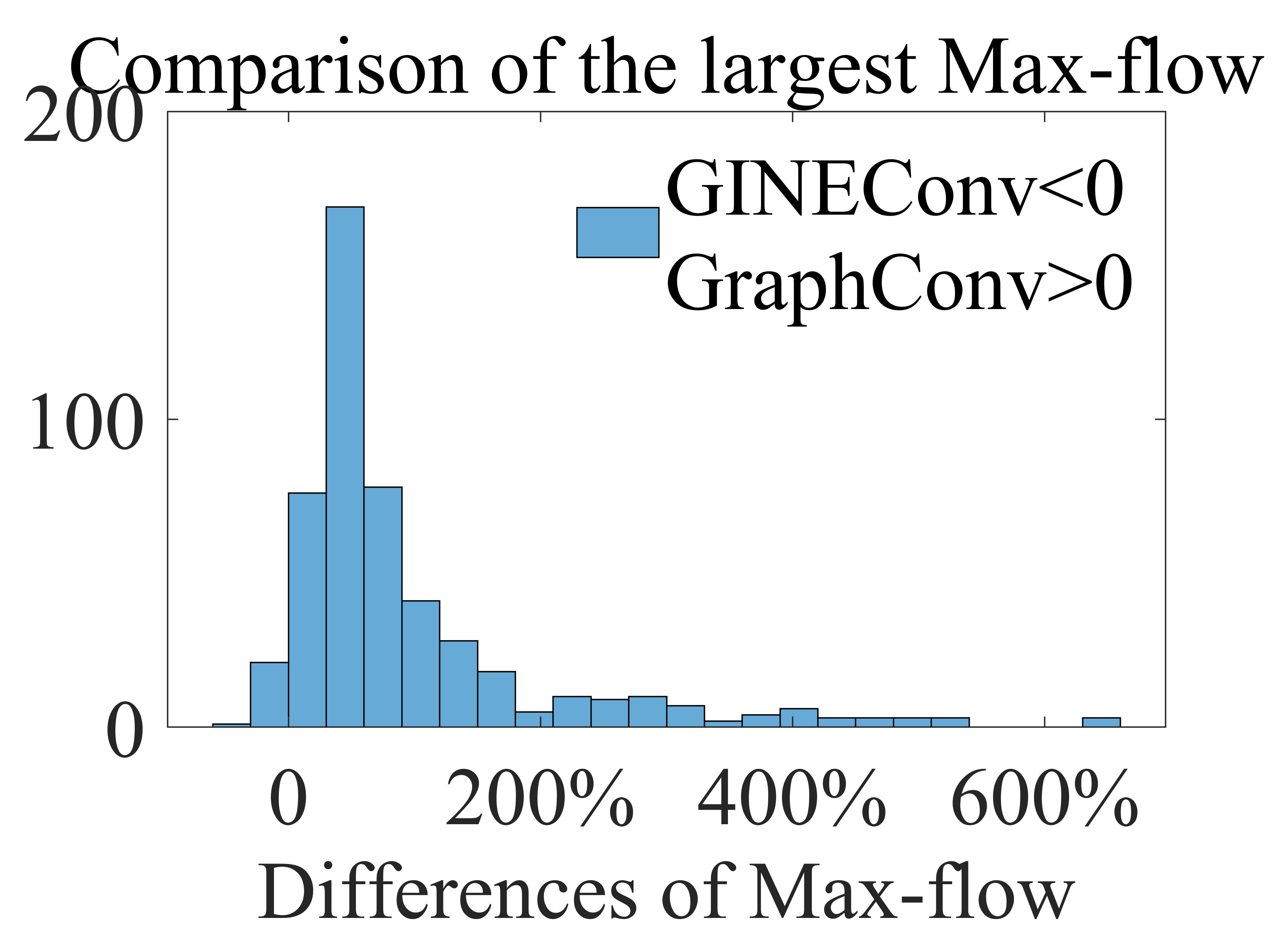}
  \caption{Comparison between GraphConv and GINEConv.}
  \label{Graph478_GIN22_his}
\end{figure}
As we can see, GNN with GraphConv layers performs much better than GINEConv.
$478$ (more than $95.6\%$) out of $500$  deployments see a superior performance of GNN with GraphConv. In particular, in $402$ of these deployments, GraphConv achieves a final max-flow at least $30\%$ larger than GINEConv.

\subsection{Additional Results on Accuracy of Max-flow Learning}\label{subsec:comparison_actual_max_flow}
In addition to the above experimental results, we compare the learned max-flow with the true values. 
We take $500$ deployments where the network nodes are located at the ending positions determined by the hybrid method on the test dataset. We show the average relative error (avg. rel. err.) and the truncated-averaged relative error (c-avg. rel. err.) excluding the highest and lowest $1\%$ values in Table~\ref{table:diff_approx}. It can be seen that the GNN-based MFL can achieve a more precise approximation of the max-flow, which is significantly superior to the WCC method.

\begin{table}[!t]
  \caption{Comparison of the average error to the actual max-flow}
  \label{table:diff_approx}
  \centering
  \renewcommand\arraystretch{1.5}
  \begin{tabular}{|c||c|c|}
    \hline
        ~ & avg. rel. err. & c-avg. rel. err.     \\ 
    \hline
\tabincell{c}{$\lambda_{2}(\mL_{\mW}) $ }   & 93.82\%  & 93.86\% \\
\hline
    \tabincell{c}{upper bound of WCC }   & 16.03\%  &14.03\%  \\ 
    \hline
    \tabincell{c}{lower bound of WCC  }  &  96.91\% &  96.93\%                                                     \\ 
    \hline
  \tabincell{c}{output of MFL}  & \bf{1.55\%}   &  \bf{ 0.79}\% \\ 
    \hline
\end{tabular}
\end{table}

\section{Conclusion}\label{sec:conclusion}
In this paper, we studied the problem of throughput maximization by optimizing the node deployment of wireless communication networks.
Specifically, we modeled the throughput as a network max-flow and used an expressive high-order GNN to learn the max-flow value under the various deployments of networks in a supervised fashion.
In the test phase, the trained GNN provided gradients for updating the locations of the relay nodes.
The correctness and effectiveness of the proposed approach were supported theoretically by investigating the approximation capabilities of permutation-invariant neural networks.
By searching the effective dataset with reinforcement learning, we  achieved competitive results compared with baselines.
Combining the existing spectral graph theory approach and the GNN-based approach, the proposed hybrid method provided some improvement in performance.
We believe that our method has a wide spectrum of potential applications such as unmanned aerial vehicle communication networks, mobile ad hoc networks,and robotic sensor networks. 

There are some limitations to the current method. In particular, this GNN-based method does not perform as well as other methods in some special deployments, which also cannot be completely solved by the hybrid method.
We hypothesize the reason is that in those deployments, the max-flow, as a function of locations, is not smooth enough and the generalizability of the underlying GNN model will deteriorate.
In the future, we will look for better approximation models to handle these cases.

\nocite*
\bibliographystyle{IEEEtran}
\bibliography{IEEEabrv, ref2}

\end{document}